\newcommand{\probleme}[3]{\medskip\noindent\textbf{#1 problem}\\ \noindent \emph{Instance: }#2\\ \noindent\emph{Question: }#3\medskip}
\newcommand{\cG}{\cal G}
\newcommand{\graph}{{snapshot graph}}
\newcommand{\tree}{{shortest-time tree}}
\newcommand{\DEL}{ELCF}
\newcommand{\fpartb}{N3DM}
\newcommand{\cA}{\mathcal{A}}
\newcommand{\cB}{\mathcal{B}}
\newcommand{\cC}{\mathcal{C}}
\newcommand{\bA}{\mathbb{A}}
\newcommand{\bB}{\mathbb{B}}
\newcommand{\bC}{\mathbb{C}}
\newtheorem{observation}{Observation}
\title{Exploring Graphs with Time Constraints by Unreliable Collections of Mobile Robots}
\author{Jurek Czyzowicz\inst{1}\thanks{Research supported in part by NSERC Discovery grant.}
\and
Maxime Godon\inst{1}
\and
Evangelos Kranakis\inst{2}\thanks{Research supported in part by NSERC Discovery grant.}
\and
Arnaud Labourel\inst{3}\thanks{Research partially supported by the ANR project MACARON (anr-13-js02-0002).}
\and
Euripides Markou\inst{4}
}
\institute{Universit\'e du Qu\'ebec en Outaouais, Gatineau, Canada
\and
School of Computer Science, Carleton University, Ottawa, Canada
\and
LIF, Aix-Marseille University \& CNRS, France
\and
University of Thessaly, Lamia, Greece
}
\begin{document}
\maketitle

\begin{abstract}
A graph environment must be explored by a collection of mobile robots. Some of the robots, a priori unknown, may turn out to be unreliable. The graph is weighted and each node is assigned a deadline. The exploration is successful if each node of the graph is visited before its deadline by a reliable robot. The edge weight corresponds to the time needed by a robot to traverse the edge. Given the number of robots which may crash, is it possible to design an algorithm, which will always guarantee the exploration, independently of the choice of the subset of unreliable robots by the adversary? We find the optimal time, during which the graph may be explored. Our approach permits to find the maximal number of robots, which may turn out to be unreliable, and the graph is still guaranteed to be explored.

%We consider the problem of exploring some weighted graph environments, where each node of the graph is  assigned a deadline and needs to be visited by a robot before its deadline is reached. The exploration is done by a team of mobile robots. We consider the case when some of the robots, a priori unknown, may crash and their tasks need to be duplicated by other members of the team. We look for the optimal time, during which the entire graph is explored, so that every graph node is visited before its deadline is reached. 

We concentrate on line graphs and rings, for which we give positive results. We start with the case of the collections involving only reliable robots. We give algorithms finding optimal times needed for exploration when the robots are assigned to fixed initial positions as well as when such starting positions may be determined by the algorithm. 
%We extend our results to the case when some unknown subset of robots may be faulty and the line must be still explored with the use of the non-faulty robots.
We extend our consideration to the case when some number of robots may be unreliable.
% and they fail to perform assigned exploration tasks. 
Our most surprising result is that solving the line exploration problem with robots at given positions, which may involve crash-faulty ones, is NP-hard. The same problem has polynomial solutions for a ring and for the case when the initial robots' positions on the line are arbitrary.

The exploration problem is shown to be NP-hard for star graphs, even when the team consists of only two reliable robots.
%
%
%\vspace{0.5cm}
%\noindent
%{\bf Key words and phrases.}
%Crash Fault, Deadline, Exploration, Graph, Line, NP-hard, Ring, Robot, Star Graph
%
%\vspace{0.5cm}
%\noindent
%{\bf Contact Author:}
%Arnaud Labourel\\
%{\bf Email:} \url{arnaud.labourel@lif.univ-mrs.fr}\\ 
%{\bf Address:} \\
%Office A203, LIF, Bibliothèque Universitaire\\
%Parc Scientifique et Technologique de Luminy\\
%163, avenue de Luminy - Case 901\\
%F-13288 Marseille Cedex 9, France\\
%{\bf Telephone number:} +33 4 86 09 04 77
\end{abstract}
\keywords{Fault, Deadline, Exploration, Graph, Line, NP-hard, Ring, Robot, Star Graph}

%\newpage
\section{Introduction}

%We consider an exploration problem with mobile robots traversing a weighted graph in which the nodes have visitation deadlines set in advance. There are several familiar examples consistent with this problem and here we mention the following two.
%\begin{itemize}
%\item
Alice and Bob is a busy Ottawa couple with three kids Chris, Donald and Elsa. One day they need to pick up Elsa from the kindergarten, drive Donald to the wrestling practice and get Chris to the train station. They also need to get groceries, pick up wine and flowers before each store closes for a dinner party in their house. How should Alice and Bob share these tasks to minimize the effort and complete each one before its deadline?

%\item
An Ottawa School Bus Company needs to transport pupils to local schools before the start of their classes. Given the harsh Canadian climate, it is the norm rather than exception that a number of buses fail to function on any given day and an adequate replacement must be planned in advance. How should the buses allocate the tasks so as to successfully conclude the distribution of students while respecting the time deadlines?   
%\end{itemize}

Throughout this paper, the environment is modelled by a graph that must be serviced by a collection of mobile robots. The graph edges are weighted by numbers, representing the time it takes to traverse them. Each graph node is assigned a deadline, representing the maximal time moment to deliver a service to this node by some mobile robot. A number of robots may crash during their work. What is the minimal time needed to service a given graph by a collection of $k$ robots? What is such a time if we assume that up to $f$ unknown robots may  crash during their work? 

\subsection{Preliminaries and notation}

%\noindent\textbf{Preliminaries and notation.}

We are given a weighted $n$-node graph $G = (V, E)$ with $V$ its set of vertices, $E$ its set of edges, and a set of $k$ mobile robots initially placed at a subset of its nodes. The weight of an edge $\{v_i, v_j\}$ corresponds to the time it takes to be traversed by a robot. Each node $v_i$ of the graph is assigned a deadline $\Delta_i$, which is a positive real number. Robots walk along the edges of the graph with unit speed. The robots collaborate attempting to explore the entire graph. However, a subset of up to $f$ robots may turn out to be unreliable and fail to collaborate. Unreliability refers to the robots which may be crash faulty in that they suffer from an (unspecified) passive, omission failure and then stop responding but are otherwise harmless.  This subset of unreliable robots may be chosen by the adversary, which is assumed to know our algorithm beforehand. The exploration is successful if each graph node is visited before its deadline by at least one of the reliable robots. 

%Our goal is to design an algorithm finding the shortest time within which the robots are able to explore the graph. 
We assume that nodes already explored ``do not block passage'' and can still be visited, even after their deadlines have expired, by robots on their way to reaching unexplored parts of the graph.

We denote by $t \to r_i(t)$ the trajectory of the $i$-th robot as a function of the time $t$, where $r_i (t)$ denotes the position of the $i$-th robot in the graph at time $t$,  for $i = 1, 2, \ldots , k$. Note that at a given time $t$, a robot may be located in the interior of an edge.

By a schedule we mean a set of functions $r_i(t), i = 1, 2, \ldots , k$ which define the motion of the robots respecting their maximum unit speed. We say that the schedule \textit{explores} the graph if for each node $v_i$ there exists a robot $r_j$ such that $r_j(t^*) = v_i$, for some time $t^* \leq \Delta_i$ .

Given a time $\Delta$, we study the {\em decision problem} whether the graph may be successfully explored before time $\Delta$. We also look at the {\em optimization problem}, that is, the problem of ensuring that the reliable robots visit every node before expiration of its deadline, and the last explored node is visited as fast as possible. If for any schedule, the adversary can find a subset of $f$ unreliable robots, so that any of the remaining $k-f$ robots fails to visit some node before its deadline, then the instance of the problem is deemed unsolvable.

\subsection{Related work}
%\noindent\textbf{Related work.}
%Exploration Multiple robots, line exploration (9d)
Searching a graph with one or more searchers has been widely studied in the mathematics literature (see, e.g. \cite{FT08} for a survey).  There is extensive literature on linear search (referring to searching a line in the continuous or discrete model), e.g., see \cite{baezayates1993searching} for optimal deterministic linear search and \cite{demaine2006online} for algorithms incorporating a \textit{turn cost} when a robot changes direction 
during the search. Variants of search using collections of \textit{collaborating} robots has also been investigated. The robots can employ either \textit{wireless} communication (at any distance) or \textit{face-to-face} communication, where communication is only possible among co-located robots. For example, the problem of \textit{evacuation} \cite{CGKNOV} is essentially a search problem where search is completed only when the target is reached by the last robot. Linear group search in the face-to-face communication model has also been studied with robots that either operate at the same speed or with a pair of robots having distinct maximal speeds \cite{SIROCCO16,Groupsearch}. Linear search with multiple robots where some fraction of the robots may exhibit either {\em crash faults} or \textit{Byzantine faults} is studied in \cite{PODC16} and \cite{ISAAC16}, respectively. 

%Fault Tolerance Faulty robots, patrolling by faulty robots
%Related to search is patrolling is a form of monitoring the environment and refers to robots walking perpetually around an area in order to protect or monitor objects or humans that need to be rescued from a disaster, e.g., in ecological monitoring or detecting intrusion. The quality of patrolling is measured by {\em idleness}, which is defined as the longest time period during which any point on the the tree is not visited by any of the robots. There is extensive literature on patrolling various environments, like general graphs \cite{ARSTMCC04,C04}, or special graphs like lines, rings \cite{DBLP:conf/spaa/CollinsCGKKKMP13}, trees \cite{DBLP:conf/spaa/CollinsCGKKKMP13} including diverse approaches based on various idleness criteria (see \cite{PR11}) under various settings on the number of faulty robots \cite{FPS2016Taleb} or for robots with distinct speeds \cite{CzyzowiczGKK11,DGT14,kawamurafence}.

%Scheduling, Rural Postman, Repairman 
The (Directed) Rural Postman Problem (DRPP) is a general case of the Chinese Postman Problem where a subset of the set of arcs of a given (directed) graph is 'required' to be traversed at minimum cost. \cite{christofides1986algorithm} presents a branch and bound algorithm for the exact solution of the DRPP based on bounds computed from Lagrangian Relaxation. \cite{corberan1994polyhedral} studies the polyhedron associated with the Rural Postman Problem and characterizes its facial structure. \cite{eiselt1995arc} gives a survey of the directed and undirected rural postman problem and also  discusses applications.

A scheduling problem considered by the research community concerns $n$ jobs, each to be processed by a single machine, subject to arbitrary given precedence constraints; associated with each job $j$ is a known processing time $a_j$ and a monotone nondecreasing cost function $c_j(t)$, giving the cost that is incurred by the completion of that job at time $t$. \cite{lawler1973optimal} gives an efficient computational procedure for the problem of finding a sequence which will minimize the maximum of the incurred costs. Further, \cite{lawler1973optimal} also studies a class of time-constrained vehicle routing and scheduling problems that may be encountered in several transportation/ distribution environments. In the single-vehicle scheduling problem with time window constraints, a vehicle has to visit a set of sites on a graph, and each site must be visited after its ready time but no later than its deadline.  \cite{young1999single} studies the problem of minimizing the total time taken to visit all sites.  \cite{garey1977two} considers the problem of determining whether there exists a schedule on two identical processors that executes each task in the time interval between its start-time and deadline and presents an $O(n^3)$ algorithm that constructs such a schedule whenever one exists.

The author of \cite{bock2015solving} resolves the complexity status of the well-known Traveling Repairman Problem on a line (Line-TRP) with general processing times at the request locations and deadline restrictions by showing that it is strongly NP-complete. \cite{mitrovic2002multiple} considers the problem of finding a lower and an upper bound for the minimum number of vehicles needed to serve all locations of the multiple traveling salesman problem with time windows in two types of precedence graphs: the start-time precedence graph and the end-time precedence graph. \cite{holte1989pinwheel} considers ``the pinwheel'', a formalization of a scheduling problem arising in satellite transmissions whereby a piece of information is transmitted for a set duration, then the satellite proceeds with another piece of information while a ground station receiving from several such satellites and wishing to avoid data loss faces a real-time scheduling problem on whether a ``useful'' representation of the corresponding schedule exists.

The work of \cite{tsitsiklis1992special} is very related to our work in that jobs are located on a line. Each job has an associated processing time, and whose execution has to start within a prespecified time window. The paper considers the problems of minimizing (a) the time by which all jobs are executed (traveling salesman problem), and (b) the sum of the waiting times of the jobs (traveling repairman problem). 
% (Dynamic) Time Varying Graphs
Also related is the research on Graphs with dynamically evolving links (also known as time varying graphs) which has been explored extensively in theoretical computer science (e.g., see~\cite{casteigts2011,flocchini2015time,kuhn2010}).

\subsection{Outline and results of the paper}
%\noindent\textbf{Outline and results of the paper.}
We consider first the collections of robots which are all reliable. We start in Section \ref{sct:1onLine}~with the case of a single robot on a line graph and we give an algorithm finding the shortest exploration time when the robot's starting position is given, is arbitrary, or it is arbitrary but restrained to some subset of line nodes. In Section~\ref{sct:ManyInLine} we study line exploration by a collection of robots at fixed or arbitrary positions on the line. We observe, that these algorithms may be extended to the ring case, although their complexity is slightly compromised.

In Section~\ref{sct:LineFaulty} we consider the case of unreliable robots. In one case, we show an unexpected result. If $k$ robots are at given fixed initial positions on the line and up to $f$ out of $k$ robots may turn out to be crash-faulty, the problem of finding the optimal exploration time is NP-hard. This result holds even if the nodes' deadlines may be ignored (e.g. they are infinite for all nodes). For all other settings we give algorithms finding optimal exploration times.
In Section~\ref{ sct:Ring}~we extend our approach to the ring environment. However, the setting which was proven to be NP-hard for lines is polynomial-time decidable for the ring. 
Finally, we show that outside the line and ring environment the problem becomes hard. For a graph as simple as a star, already for the case of two robots, the exploration problem turns out to be NP-complete.

Because of the space constraints, all proofs and some illustrations are moved to the Appendix.

\section{Single Robot on the Line}\label{sct:1onLine}

In this section, we present algorithms that allow a single robot to solve the optimization problem on the line for two cases: when the robots' initial positions are given by an adversary, and when we have the possibility of choosing them ourselves.

We have a sequence of nodes $v_0 < v_1 < \cdots < v_{n-1}$ on the real line, and a robot $r$ initially placed at initial position $r(0)$. We denote by $v_s$ the starting node of the robot, i.e. $r(0) = v_s$. 

%\textbf{Observation 1.} 
\begin{observation}
\label{obs-nonDecreasingDelta} Without loss of generality we may assume that $\Delta_{s+1} < \Delta_{s+2} < \cdots < \Delta_{n-1}$. Indeed, if $\Delta_k \geq \Delta_{k+1}$ for some $k > s$ we can drop node $v_k$ from consideration, since visiting $v_{k+1}$ before its deadline implies that $v_k$ is also visited before its deadline. For the same reason, we can also assume that $\Delta_0 > \Delta_1 > \cdots > \Delta_{s-1}$. 
\end{observation}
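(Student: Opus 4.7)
The plan is to turn the observation into a simple continuity argument that exploits the topology of the line. The robot's trajectory $t \mapsto r(t)$ is a continuous function with $r(0) = v_s$, so by the intermediate value theorem, for any time $t^*$ at which the robot is located at a node $v_j$ on one side of $v_s$, it has, at some earlier time $t' \leq t^*$, visited every node lying between $v_s$ and $v_j$ on the real line. This is the only geometric fact needed.

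First I would handle the right side. Fix $k > s$ and suppose $\Delta_k \geq \Delta_{k+1}$. In any schedule that visits $v_{k+1}$ at some time $t^* \leq \Delta_{k+1}$, the continuity argument above forces the robot to have visited $v_k$ at some earlier time $t' \leq t^*$, whence $t' \leq t^* \leq \Delta_{k+1} \leq \Delta_k$. Therefore the deadline at $v_k$ is automatically satisfied whenever the deadline at $v_{k+1}$ is, so the constraint at $v_k$ is redundant; we may drop $v_k$ from the instance without altering the set of feasible schedules or the optimal completion time. Iterating this removal finitely many times (each step strictly decreases the number of remaining nodes) yields a subsequence of deadlines satisfying $\Delta_{s+1} < \Delta_{s+2} < \cdots < \Delta_{n-1}$.

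The left side is symmetric. For $0 < k < s$, the node $v_k$ lies between $v_{k-1}$ and $v_s$, so reaching $v_{k-1}$ forces a prior visit to $v_k$; hence if $\Delta_k \geq \Delta_{k-1}$ we may drop $v_k$, and iterating produces $\Delta_0 > \Delta_1 > \cdots > \Delta_{s-1}$. It is worth observing explicitly that a dropped node $v_k$ is still physically traversed whenever the dominating neighbor is visited, so no schedule is excluded by the reduction, and no new schedule is created: the reduction is a true equivalence on the level of feasibility and of the optimal time.

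There is no real obstacle here; the entire proof rests on the one-dimensional fact that a continuous path cannot skip points. The only minor care lies in formulating the iteration so that after each deletion the remaining nodes are still consecutive on the line with their original deadlines, which is immediate since deletion preserves the linear order.
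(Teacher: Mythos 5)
Your proof is correct and follows essentially the same route as the paper, which justifies the observation by the single remark that visiting $v_{k+1}$ (resp.\ $v_{k-1}$) forces an earlier visit to the intermediate node $v_k$; you merely make the underlying intermediate-value/continuity argument and the termination of the iterated deletion explicit. No gaps.
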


%\textbf{Observation 2.} 
\begin{observation}
\label{obs-increasingSequence} Without loss of generality we may consider only the solutions which consist of sequences that are increasing and decreasing at alternate nodes, respectively, i.e., sequences $r(0), r(t_1), r(t_2), \ldots  , r(t_p)$ such that $ 0 \leq r(t_{2i}) < r(t_{2i+2})$, and $ 0 \geq r(t_{2i+1}) > r(t_{2i+3})$, for all $i$ in the appropriate range. Moreover, each turning node $r(t_i)$ is located at some node $v_j , j = 0,1,  \ldots , n-1$.%\footnote{This observation must be rephrased.}  
%that consist of the alternative sequence of turning nodes increasing in both directions, i.e. $r(0), r(t_1), r(t_2), \ldots  , r(t_p)$ such that $ 0 \leq r(t_{2i}) < r(t_{2i+2})$, and $ 0 \geq r(t_{2i+1}) > r(t_{2i+3})$. Moreover, each turning node $r(t_i)$ is located at some node $v_j , j = 0,1,  \ldots , n$. 
\end{observation}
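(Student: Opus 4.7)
The plan is to take any valid schedule and transform it into one having the stated alternating form, such that each node's first-visit time does not increase. Since validity depends only on first-visit times, this suffices. The transformation has two stages: first, eliminate wasted oscillations so that successive turning points are alternating new extremes; second, snap any turn that falls inside an edge to the adjacent node.

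For the alternating stage, let $r(t)$ be a given valid schedule, and set $L(t) = \min_{s \le t} r(s)$, $R(t) = \max_{s \le t} r(s)$. Since the robot moves continuously at unit speed on a connected line, the set of positions already visited by time $t$ equals the interval $[L(t), R(t)]$, so $v_i$ is first visited at the earliest time $t$ with $v_i \in [L(t), R(t)]$. I would extract the subsequence of \emph{productive} turns of $r$: a turn is productive when it reaches a position strictly to the right of every previous rightward extreme, or symmetrically strictly to the left of every previous leftward extreme. Call these productive extremes $p_1, p_2, \ldots, p_m$, reached at times $\tau_1 < \tau_2 < \cdots < \tau_m$. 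By definition they alternate sides of the trajectory.

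Now define $\tilde r$ to start at $v_s$ and move directly, at unit speed, from $p_{j-1}$ to $p_j$ for each $j$, arriving at time $\tilde\tau_j$. By induction on $j$, $\tilde\tau_j \le \tau_j$: the original $r$ must cover a distance of at least $|p_j - p_{j-1}|$ between times $\tau_{j-1}$ and $\tau_j$ (the direct distance is a lower bound on any unit-speed path), whereas $\tilde r$ covers exactly this distance. Consequently $[L(t), R(t)] \subseteq [\tilde L(t), \tilde R(t)]$ for every $t$, so each node is first visited in $\tilde r$ no later than in $r$. The schedule $\tilde r$ is alternating by construction.

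For the second stage, suppose $\tilde r$ turns at an interior point $x$ with $v_j < x < v_{j+1}$ on, say, a rightward excursion. Since the next node $v_{j+1}$ is not reached by this excursion (otherwise the productive extreme would have been further right), the extra travel $x - v_j$ visits no new node. Turning instead at $v_j$ preserves all node visits and advances every subsequent event by $2(x - v_j)$ time units, which can only improve first-visit times. Applying this at every turn yields the required schedule. The step I expect to require the most care is making the inductive bound $\tilde\tau_j \le \tau_j$ rigorous, in particular the definition of ``productive'' and the verification that between consecutive productive extremes $r$ truly expends at least the direct distance; once that is pinned down the remainder reduces to direct comparisons of monotone functions.
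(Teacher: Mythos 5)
The paper offers no proof of this observation --- it is stated as self-evident and no argument for it appears in the appendix --- so your proposal is a correct filling-in of the standard exchange argument rather than a variant of anything the authors wrote. Two points deserve tightening. First, your claim that the productive extremes ``alternate sides by definition'' is not literally true: the robot can set a new right extreme, retreat without setting a new left extreme, and then set another new right extreme, giving two consecutive productive extremes on the same side. This does not hurt the construction --- in $\tilde r$ the direct legs into and out of such a $p_{j-1}$ point in the same direction and merge into one monotone leg, so the actual turning points of $\tilde r$ do alternate --- but the alternation is a property of the resulting trajectory, not of the extracted subsequence as you state it. Second, the containment $[L(t),R(t)]\subseteq[\tilde L(t),\tilde R(t)]$ needs one more line for times $t$ at which $r$ is currently at its running maximum (or minimum) but has not yet turned: there $R(t)=r(t)$ is not yet any recorded $p_j$, and one must observe that $r(t)$ lies on the direct segment from $p_{j-1}$ to the next productive extreme $p_j$, so $\tilde r$ reaches it by time $\tilde\tau_{j-1}+|r(t)-p_{j-1}|\le t$. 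Finally, after snapping interior turns to nodes, two successive same-side extremes lying in the interior of the same edge collapse to the same node, and the now-redundant excursion should simply be deleted (or stage one re-applied). All of these are routine repairs; the argument is sound and establishes exactly what the observation asserts.
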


\subsection{The \graph}
\label{section:SnapshotGraph}

With these observations in mind, we define the fundamental concept of a directed, layered {\em \graph} $S$~which will form the basis of all subsequent algorithms.
%\vspace{0.2cm}

%\noindent \textbf{The \graph}
%\vspace{0.2cm}

Every node of the  \graph ~$S$~represents a situation when a new node of the line is visited by the robot for the first time. Consequently, each node of $S$~is denoted by a pair $(i,\bar {j})$ or $(\bar {i}, j)$, where $i \leq j$, $[i,j]$ is the interval of nodes already explored by the robot and the node of the line marked with the bar (either $\bar{i}$ or $\bar{j}$) denotes the current position of the robot.

Observe that the robot can advance its exploration in one of two ways: either by visiting the next unexplored node to the left of the interval already explored, or by visiting the next unexplored node to its right. These two possibilities generate the directed edges between the nodes of  the \graph. The weight of such an edge equals the time needed by the robot to traverse the path between robot positions in both nodes. Consequently, the nodes $(i,\bar {j})$ and $(\bar {i}, j)$ are placed at layer $j-i$ and the adjacencies in $S$~are only between nodes of consecutive layers. Notice the following properties of the \graph~(see also Fig.~\ref{fig:snapshot} in the Appendix):
\begin{itemize}
\item The graph $S$~has $n$  layers numbered from $0$ to $n-1$. 
\item There are $n$ {\em source} nodes at the zeroth layer and $2(n-j)$ nodes at  the $j$-th layer for each $j=1,2, \cdots, n-1$. Consequently, there are 2 {\em target} nodes (on the  $(n-1)$-th target layer).
\item The in-degree and the out-degree of each node is bounded by 2. Hence the complexity of the \graph~is $O(n^2)$.
\end{itemize}

Observe that, the solution to the optimization problem for the line corresponds to the shortest path from the source node representing the initial position of the robot to one of its target nodes, which respects the time constraints of all the nodes of $L$.

\subsection{Given initial position of the robot}
\label{Given initial position:subsec}

%\textbf{Our solution.}

We first present an algorithm which produces the optimal exploration path, assuming a given starting position $v_s$ of the robot on the line. Consider the \graph~$S$~described above. In order to obtain the optimal exploration path in the \graph~respecting the time constraints of $L$, we generate an all-targets shortest-time tree $T$ whose root  coincides with the  node $(v_s, \bar{v_s})$ of the \graph~corresponding to the initial position $v_s$ of the robot. This is done in the following way.

We add a time counter $time$ to every node of $S$. We set to zero the time counter of the initial node $(v_s, \bar{v_s})$ and to $\infty$ the initial time counters of all other nodes of $S$. We then visit all nodes of $S$~layer by layer. Consider a visit of any such node $v$, which corresponds to the first visit to node $v_j$ of $L$. For each predecessor of $v$ in $S$~ we consider the time equaling its time counter augmented by the weight of the edge joining it with $v$. Let $Min$ denote the smaller of these values (we take an arbitrary one in the case of equality). If $Min$ does not exceed the time constraint of $v_j$ (i.e. $Min \leq \Delta_j$) we set the time constraint of $v$ to $Min$ and we add to $T$ the edge from the corresponding predecessor of $v$. Otherwise, the time counter of $v$ is set to $\infty$ and we leave $v$ parentless.

Observe that, $T$ is a tree, as each node has at most one parent. One of the two target nodes of the smaller time counter defines the optimal exploration time and the path to it in $T$ corresponds to an optimal exploration path of $L$. Otherwise, there exists no exploration path respecting the node deadlines of the line graph.

For any node $v$ of $S$~we denote by $new(v)$ the index of the node of the line \cG~which is newly explored when arriving at $v$. More exactly, $new(v) = j$, such that either $v = (i,\bar{j})$ or $v = (\bar{j},k)$, for some $i \leq j \leq k \leq n-1$.

The following procedure InitStart indicates how to initialize the time counters of the nodes of $S$~before running the main body of the algorithm. For each node $i$ of the line $L$, which may be a starting position of a robot, we put a node $(i, \bar i)$ of $S$ to the set $A$. All nodes of $A$ have their time counters initialized to 0. 

\begin{procedure} [!ht] 
\caption{InitStart($A$, $S$) with $A$ a subset of nodes of $S$ at zeroth layer\;}
\For{every node $v$ of $V(S) \setminus A$}{
	$time(v) = \infty$\;
}
\For{every node $v$ of $A$}{
	$time(v) = 0$\;
}
\end{procedure}

Algorithm~\ref{algo:1RobotLine} describes pseudo-code that formalizes the previously outlined construction of a \tree. 

\begin{algorithm} [!ht] 
\caption{Single Robot exploration on the line with given initial position $v_s$; }\label{algo:1RobotLine}
\KwIn{A snapshot graph $S$ and the starting position $v_s$ of the robot}
\KwOut{An exploration tree with optimal exploration times}
InitStart$(\{v_s\}, S)$\;
\For{layer $i=0$ to $n-1$}{
	\For{each arc $v \to w$ starting at layer $i$}{
		$t =$ time$(v) + weight(v,w)$\;
		\If{$t < time(w) ~and~ t \leq \Delta_{new(w)}$}{
			$time(w) = t$; $v = parent(w)$;		
		}
	}
}
\end{algorithm}

Please see the Appendix for an execution of Algorithm~\ref{algo:1RobotLine}.

%\begin{lemma} \label{lemma:line}
%Algorithm~\ref{algo:1RobotLine} creates the necessary data structure in quadratic time. More so, it encompasses every possible algorithm with respect to observation~\ref{obs-nonDecreasingDelta} and~\ref{obs-increasingSequence}.
%\end{lemma}

%\begin{proof}
%Each node is defined by a leftmost position, a rightmost position, and the location of the robot (left or right). As there are at most $n$ leftmost positions, at most $n$ rightmost positions, and only two possible locations, the maximal number of nodes in this tree is $2n^2$. The entire data structure has a height of at most $n$, which corresponds to the total time required to traverse the tree from root to a final leaf. Each node has at most 2 children, which corresponds to the extension of a given situation (leftmost node, rightmost node, location of the robot) to the next situation, i.e. reaching the next leftmost or rightmost node. 

%Every algorithm considered under Observations~\ref{obs-nonDecreasingDelta} and~\ref{obs-increasingSequence} can be modelled as a path in this graph, going from the root to one of the two final leaves. When we generate a node that already exists in our parent tree, we compare the time required to reach it, and only keep the parent with the smallest time, thus pruning the algorithm that require sub-optimal time for the robot to visit all nodes. We also prune every algorithm that visits a node for the first time after its deadline, thus only keeping the algorithms that are both valid and optimal. 
%\end{proof}

\begin{theorem}
\label{1fixed source}
Consider a line graph \cG~and a robot placed at its starting position $v_s$. Algorithm~\ref{algo:1RobotLine}
%$LineExp({s})$ 
correctly computes an optimal exploration path which satisfies the node deadlines in $O(n^2)$ time.
\end{theorem}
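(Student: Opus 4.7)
The plan is to prove correctness and complexity separately, relying on the structural facts already established.

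For correctness, I will proceed in three steps. First, combining Observations~\ref{obs-nonDecreasingDelta} and~\ref{obs-increasingSequence}, every optimal robot trajectory corresponds to a directed path in the snapshot graph $S$ starting at the source $(v_s,\bar{v_s})$ and ending at one of the two target nodes of layer $n-1$: the weight of such a path equals the total traversal time, and the path is feasible iff for every node $w$ on it the arrival time does not exceed $\Delta_{new(w)}$. Thus the optimization problem reduces to finding a minimum-weight feasible source-to-target path in $S$. Second, I will establish optimal substructure: if $P$ is a feasible path reaching a node $w$ of $S$ at time $t$, and $P'$ is any feasible path reaching $w$ at time $t'\le t$, then the concatenation of $P'$ with the suffix of $P$ past $w$ is again feasible (deadlines are upper bounds and arrival times along the suffix can only shrink). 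Consequently, among all feasible paths reaching $w$, the minimum arrival time is the only relevant quantity to record at $w$. Third, I will prove by induction on layer index $i$ that once Algorithm~\ref{algo:1RobotLine} finishes processing layer $i$, the value $time(w)$ stored at every $w$ of layer $i$ equals the minimum feasible arrival time at $w$ (or $\infty$ if no feasible path reaches $w$). The base case at layer $0$ is handled by InitStart. The inductive step uses the fact that arcs go only between consecutive layers, so every predecessor of $w$ has its final counter before $w$ is processed; since each node has at most two predecessors, both are examined, and $time(w)$ is updated whenever (a) the resulting arrival time respects $\Delta_{new(w)}$ and (b) it improves the current value. Following parent pointers from the target with smaller counter then yields the optimal path.

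The main obstacle is the subtle interaction between deadline feasibility and the shortest-path recursion: one might worry that picking the arrival-time-minimizing predecessor could preclude a feasible continuation that is reachable only through a slower predecessor. The monotonicity argument of step two resolves this, because a smaller arrival time at $w$ can never harm feasibility of any suffix. This is exactly what makes the locally greedy relaxation rule in the algorithm globally optimal, and it is the place where the structure of the problem (deadlines as upper bounds, arrival-time-monotone propagation) is used in an essential way.

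For complexity, the snapshot graph has $n$ source nodes plus $2(n-j)$ nodes at each layer $j\ge 1$, giving $O(n^2)$ nodes in total. Because in-degree and out-degree are bounded by $2$, the number of arcs is also $O(n^2)$. Procedure InitStart visits every node once, and the main loop inspects each arc in constant time, so the overall running time is $O(n^2)$. Combining correctness with this bound establishes the theorem.
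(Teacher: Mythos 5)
Your proof is correct and follows essentially the same route as the paper's: a layer-by-layer induction on the snapshot graph showing that $time(w)$ records the minimum feasible arrival time, combined with the $O(n^2)$ size bound on $S$. The one place you go beyond the paper is in explicitly isolating the monotonicity/exchange argument (an earlier arrival at $w$ never hurts feasibility of any continuation), which the paper's inductive step uses implicitly when it replaces the prefix of an optimal path by the recorded shortest-time path to the predecessor; making this explicit is a genuine improvement in rigor but not a different approach.
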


%\begin{theorem}
%Algorithm~\ref{algo:1RobotLine} is correct, i.e. it always identifies a solution that will allow the robot $r$ to visit every node before its deadline, or will show that it is impossible to visit all nodes before their deadline. Furthermore, this algorithm always produce an optimal path for the robot. This Algorithm runs in time $2n^2$.
%\end{theorem}

%\begin{proof}
%Directly from Lemma~\ref{lemma:line}.
%\end{proof}

\subsection{Arbitrary starting position}

We now consider a variation of the problem when the choice of the starting position of the robot is left to the user or it is restricted to be chosen from a subset of nodes of the line graph. We will show that Algorithm~\ref{algo:1RobotLine} also works in such a setting. We need, however, to modify the call to procedure InitStart in line 1 of Algorithm~\ref{algo:1RobotLine}, so that its first parameter equals the set of all nodes of the line at which the robot may start. An example of its execution is presented in the Appendix.

Observe that, for any node $w$ of the \graph, the value of $time(w)$, computed by the algorithm, represents now the shortest exploration time ending at $w$ starting from any node of the line graph. $T$ is now a forest with the nodes of $T$, whose time counter remains at $\infty$ isolated in $T$ (having no children or parent in $T$).

\begin{corollary}
\label{1any source}
Let $A$ be the subset of nodes of the line graph which we can choose for the starting position of the robot.  Suppose that the first parameter of the call to procedure InitStart in line 1 of Algorithm~\ref{algo:1RobotLine} $(A)$ equals the set of all nodes from zeroth level of $S$ which correspond to the nodes of $A$. Such version of Algorithm~\ref{algo:1RobotLine}
%$LineExp({s})$ 
correctly computes  in $O(n^2)$ time an optimal exploration path of the line graph, which satisfies the node deadlines. Moreover, for any sub-interval $[i,j]$ of the line, the algorithm computes an optimal robot starting position to explore $[i,j]$, the cost (time) of such exploration and the trajectory of the robot.
\end{corollary}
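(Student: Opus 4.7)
The plan is to reduce the corollary to Theorem~\ref{1fixed source} by treating the modified Algorithm~\ref{algo:1RobotLine} as a standard multi-source shortest-path computation on the snapshot graph $S$. Since $S$~is a layered DAG whose layers are processed in topological order, initializing $time(v)=0$ for every $v\in A$ at layer $0$ and leaving the rest of the algorithm untouched yields, by a straightforward induction on the layer index, the invariant that after the algorithm terminates $time(w)$ equals the minimum exploration time of any trajectory ending at $w$ and starting at some node of $A$, subject to all intermediate deadline constraints being satisfied (or $\infty$ if no such trajectory exists).

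First I would verify the inductive step. For a node $w$ at layer $\ell>0$, any deadline-respecting trajectory from some $a\in A$ to $w$ must pass through a predecessor $v$ of $w$ in $S$~at layer $\ell-1$, and by the inductive hypothesis $time(v)$ is the minimum time to reach $v$ from $A$. The check $t\le \Delta_{new(w)}$ enforces the deadline of the \emph{only} new node being explored at $w$ (all earlier deadlines are already accounted for along the tree edges leading to $v$). Taking the minimum of $time(v)+weight(v,w)$ over predecessors is therefore correct, and because the deadline test depends only on the cumulative time and the node newly explored, \emph{not} on the identity of the source, the argument of Theorem~\ref{1fixed source} transfers verbatim.

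Next I would argue the running time. The snapshot graph has $O(n^2)$ nodes and, since each node has in-degree and out-degree at most two, $O(n^2)$ arcs. The initialization by procedure InitStart visits every layer-$0$ node once, and the main loop relaxes each arc once, so the total cost is $O(n^2)$. The optimal exploration is then read off as the target node (at layer $n-1$) with the smaller $time$ value, and the trajectory is recovered by following $parent$ pointers back to a node of $A$; if both target time counters are $\infty$, no valid exploration exists.

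For the final ``moreover'' clause, I would observe that the nodes at layer $j-i$ of $S$~that correspond precisely to the interval $[i,j]$ are exactly $(i,\bar j)$ and $(\bar i,j)$: they encode the two situations in which the robot has just finished exploring $[i,j]$, sitting at its right or left endpoint respectively. Hence $\min(time(i,\bar j), time(\bar i,j))$ is the optimal time to explore $[i,j]$ from some starting point in $A$, and tracing parent pointers from the minimizer yields both the optimal starting position and the full trajectory---all produced by the same single run of the algorithm. The main obstacle I anticipate is the verification of the invariant in the presence of the deadline filter; once one notes that the filter is ``local'' (it consults only $\Delta_{new(w)}$ and the running time, never the source), the multi-source extension is essentially free.
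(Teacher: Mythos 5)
Your proposal is correct and follows essentially the same route as the paper: the paper's own proof is a two-sentence remark that the argument of Theorem~\ref{1fixed source} carries over verbatim to the multi-source initialization, with parent pointers tracing each snapshot-graph node back to the optimal starting node in $A$. Your additional observations---that the deadline filter consults only $\Delta_{new(w)}$ and the cumulative time (never the source), and that the ``moreover'' clause is read off from the nodes $(i,\bar j)$ and $(\bar i,j)$---are exactly the points the paper leaves implicit, so nothing is missing or different in substance.
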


\section{Multiple Robots on the Line}\label{sct:ManyInLine}

In this section we consider line exploration by a collection of $k < n$ mobile robots. As before we study two variants of the time optimization problem. In the first setting, the distinct initial robot positions are given in advance. In the second setting, the initial positions of the robots are arbitrary, i.e. the algorithm identifies the initial placement of the robots, which results in the shortest exploration time respecting the node deadlines. Both variants are solved using versions of dynamic programming. We start with the following observation concerning the movement of the robots\footnote{We remind the reader that all robots move with identical unit speed.}.

\begin{observation}
\label{disjoint}
There exists an optimal exploration solution in which the robots never change their initial order along the line. Moreover, the sub-intervals of the line explored by different robots are mutually disjoint.
\end{observation}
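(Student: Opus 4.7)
The proof proceeds in two steps.

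\emph{Order preservation.} I would use the standard exchange argument. Whenever two robots $r_i$ and $r_j$ (with $r_i(0)<r_j(0)$) occupy a common position at some time $t_0$, swap their trajectories from $t_0$ onward. Since both move at the same unit speed and are otherwise indistinguishable, the multiset of (position, time) pairs traced by the collection is unchanged; in particular every node is still covered at the same time, the schedule remains feasible, and its completion time is preserved. Iterating eliminates all crossings and yields an equivalent optimal schedule with $r_1(t) \leq r_2(t) \leq \cdots \leq r_k(t)$ for all $t$.

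\emph{Disjoint explored sub-intervals.} Start from a non-crossing optimal schedule, and let $[L_i, R_i]$ be the contiguous real sub-interval actually traced by $r_i$. Non-crossing gives $L_i \leq L_{i+1}$ and $R_i \leq R_{i+1}$; feasibility forces the no-gap condition $L_{i+1} \leq R_i + 1$ (otherwise some node between $R_i$ and $L_{i+1}$ would be unvisited); and the strict ordering of initial positions gives $r_i(0) < r_{i+1}(0)$. Using these, I would pick node indices $j_1 < \cdots < j_{k-1}$ serving as cuts --- one between each pair of consecutive robots --- such that $s_i \leq j_i < s_{i+1}$, $v_{j_i} \leq R_i$, and $v_{j_i+1} \geq L_{i+1}$; the no-gap and ordering conditions make such a choice always possible. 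Assign to $r_i$ the slice $I_i$ consisting exactly of the nodes $v_{j_{i-1}+1}, \ldots, v_{j_i}$ (with $j_0 = -1$, $j_k = n-1$). By construction, the slices $I_i$ are pairwise disjoint, partition $V$, contain $r_i(0)$, and their node sets lie inside $[L_i, R_i]$.

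To produce the desired schedule, I would replace each trajectory $\gamma_i$ by its clamp $\gamma_i'(t) = \max\{c_{i-1},\ \min\{c_i,\ \gamma_i(t)\}\}$, where $c_{i-1}, c_i$ are any reals separating $I_{i-1}$ from $I_i$ and $I_i$ from $I_{i+1}$. Since clamping is a contraction and $\gamma_i$ is $1$-Lipschitz, $\gamma_i'$ is a valid (unit-speed-or-slower) trajectory whose image lies in $[c_{i-1}, c_i]$; and for every node $v \in I_i$, the clamp acts as the identity at every time when $\gamma_i(t)=v$, so $\gamma_i'$ visits $v$ no later than $\gamma_i$ did. Therefore every node is still visited before its deadline, the explored sub-intervals of the new schedule are pairwise disjoint, and its completion time is no larger than the original --- as required. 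The main subtlety is the non-emptiness of the set of valid cut indices, which ultimately reduces to the no-gap inequality $L_{i+1} \leq R_i + 1$ combined with $r_i(0) < r_{i+1}(0)$.
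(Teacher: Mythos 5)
The paper states this as an Observation and offers no proof of its own (there is no corresponding appendix section), so there is nothing to compare line by line; judged on its merits, your argument has the right skeleton (uncross by exchanging trajectories, cut the line between consecutive starting positions, clamp each trajectory to its slice), but the cut-selection step has a genuine gap. You choose $j_i$ using only the conditions $v_{j_i}\le R_i$ and $v_{j_i+1}\ge L_{i+1}$, i.e., membership of a node in the \emph{range} traced by a robot, and then conclude that since the clamp visits each $v\in I_i$ no later than $\gamma_i$ did, ``every node is still visited before its deadline.'' But a node $v\in[L_i,R_i]\cap[L_{i+1},R_{i+1}]$ may be reached by robot $i$ only \emph{after} $\Delta_v$, its deadline having been met in the original schedule only by robot $i+1$; if your cut assigns $v$ to $I_i$, the clamped schedule misses that deadline. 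Concretely: unit-weight line with nodes at $0,1,2,3,4$, robot $1$ starting at $1$, robot $2$ at $3$, deadline $1$ at node $2$ and huge deadlines elsewhere; robot $1$ moves $1\to 0\to 2.5$, robot $2$ moves $3\to 2\to 4$. This schedule is feasible and non-crossing with $R_1=2.5$ and $L_2=2$, and $j_1=2$ satisfies all of your conditions, yet robot $1$ reaches node $2$ only at time $3>1$, so the clamped schedule fails. Your conditions admit the bad cut as well as the good one ($j_1=1$), so the construction as written is not correct.

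The missing idea is that the cut must separate nodes according to who visits them \emph{in time}, not merely who passes through them. Let $W(v)$ be the set of robots visiting $v$ before $\Delta_v$; it is nonempty by feasibility, and nonemptiness survives uncrossing because the multiset of robot positions at each time is unchanged. After uncrossing, a monotone selection $v\mapsto\rho(v)\in W(v)$ exists: if $u<u'$ with $i\in W(u)$, $j\in W(u')$ and $i>j$, compare the visit times $t\le\Delta_u$ and $t'\le\Delta_{u'}$; using $r_j(\cdot)\le r_i(\cdot)$ and continuity, robot $i$ must pass through $u'$ before $t'$ (if $t<t'$) or robot $j$ must pass through $u$ before $t$ (if $t>t'$), so either $i\in W(u')$ or $j\in W(u)$ and no obstruction to monotonicity arises. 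The resulting blocks are contiguous, contain the respective starting positions, and every node in robot $i$'s block is timely visited by robot $i$ itself, after which your clamping argument goes through verbatim. Two smaller points: on a weighted line the no-gap condition should read ``no node lies strictly between $R_i$ and $L_{i+1}$'' rather than $L_{i+1}\le R_i+1$; and rather than iterating pairwise swaps (whose termination is not obvious when trajectories touch infinitely often), it is cleaner to define the uncrossed schedule directly as the pointwise order statistics of $r_1(t),\dots,r_k(t)$.
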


We use the following notation. Suppose that we need to explore an interval $[i,j]$ of the line respecting the deadlines of the nodes of $[i,j]$. For the setting when the robots are placed at given initial positions, for any pair of indices $i,j$, such that $0\leq i\leq j \leq n-1$, we denote by $T_{i,j}$ the optimal time of exploration of the interval $[i,j]$ using the robots placed within $[i,j]$. When the initial placement of the robots is left to the algorithm, for any $1 \leq r \leq k$, we denote by $T_{i,j}^{(r)}$ the optimal time of exploration of the interval $[i,j]$ using $r$ robots which may be placed at arbitrary initial positions within $[i,j]$.

\subsection{Given initial positions}

%
%\textbf{Observation 3} \label{obs-noSwap} Robots exploring the line should never meet. Indeed, if an algorithm indicates that they are to meet, we can think of a different algorithm where they each turn in the other direction before meeting, which will always have better time bound.

%\textbf{Solving the decision and optimization problems}

We start with the following observation

\begin{observation}
\label{start-dyn-prog}
Consider a line and a robot initially placed in its sub-interval $[i,j]$. Using Algorithm~\ref{algo:1RobotLine} the values  $T_{i,j}$ for all $0\leq i\leq j \leq n-1$, may be computed by the formula
\begin{equation}
\label{tij1}
T_{i,j}= \min(time((i, \overline{j})), time((\overline{i},j)))
\end{equation}
\end{observation}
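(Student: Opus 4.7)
The plan is to read off $T_{i,j}$ from the output of Algorithm~\ref{algo:1RobotLine} applied to the sub-line $L' = (v_i, v_{i+1}, \dots, v_j)$, with the given robot's starting position as the source. First, I would argue that in any optimal exploration of $[i,j]$ by a single robot initially located inside $[i,j]$, the robot's trajectory stays inside $[i,j]$: leaving this sub-interval makes no progress toward exploring its nodes and only delays subsequent visits, so any such excursion can be truncated from the schedule without increasing its completion time or violating any deadline. By Observation~\ref{obs-increasingSequence}, the trajectory may then be assumed to change direction only at line nodes.

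The next step is to identify the relevant terminal states in the snapshot graph restricted to $L'$. By the construction of Section~\ref{section:SnapshotGraph}, this restricted graph $S'$ is itself a \graph\ with $j-i+1$ layers, and its last layer (layer $j-i$) contains exactly the two nodes $(i,\bar j)$ and $(\bar i, j)$, representing the two possible configurations in which $[i,j]$ has just become fully explored: the robot ends at $v_j$ (if the last unvisited node lay to the right) or at $v_i$ (if it lay to the left). Since an alternating trajectory in the sense of Observation~\ref{obs-increasingSequence} must finish with a traversal that reveals the last unexplored endpoint of the already-explored interval, every complete exploration of $[i,j]$ terminates in one of these two snapshot states.

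Finally, applying Theorem~\ref{1fixed source} to $S'$ ensures that the time counters computed by Algorithm~\ref{algo:1RobotLine} satisfy $time(v) = $ the minimum time needed to reach $v$ from the source while respecting every deadline along the way. Taking the minimum over the two terminal nodes therefore yields
\[
T_{i,j} \;=\; \min\bigl(time((i,\bar j)),\; time((\bar i, j))\bigr),
\]
as claimed. The only mildly non-routine step is the truncation argument in the first paragraph used to justify confining the robot to $[i,j]$; all remaining steps are a direct unpacking of the definitions of the \graph\ and the semantics of Algorithm~\ref{algo:1RobotLine}.
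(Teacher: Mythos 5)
Your argument is correct and matches the paper's intent: the paper leaves this Observation unproved, treating it as an immediate consequence of Theorem~\ref{1fixed source}, whose proof already establishes that $time((\bar i,j))$ and $time((i,\bar j))$ are the optimal deadline-respecting exploration times of $[i,j]$ ending at $v_i$ and $v_j$ respectively, so that $T_{i,j}$ is their minimum. Your additional truncation argument (confining the robot to $[i,j]$) and the identification of the two terminal snapshot states are exactly the right details to make this explicit.
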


Let $p_i$ denote the initial position of robot $i$. We assume that we have  $0 \leq p_1 < p_2 < \dots < p_k \leq n-1$. By Observation~\ref{disjoint} we need to partition the line into sub-intervals $[l_i, r_i]$ for $i=1,2, \dots, k$ (with $l_1=1$ and $r_k=n$), each one explored by a different robot. The interval $[l_i, r_i]$, explored by robot $i$, contains its initial position $p_i$, but not an initial position of any other robot.  Hence edges $(r_i,l_{i+1})$ for $i=1, \ldots, k-1$, that we call {\em idle edges}, are never traversed by any robot. The following formula, is an obvious consequence of Observation~\ref{disjoint}, 
\begin{equation}
\label{tij}
T_{i,j} = \min\limits_{p_q<m\leq p_{q+1}} \max (T_{i,m-1} ,T_{m,j} ) ,
\end{equation}
for any $i \leq p_{q}, p_{q+1} < j$.
Indeed, %as the sub-intervals of the line, in which robots $q$ and $q+1$ operate, are disjoint, 
the idle edge $(m-1,m)$, separating the sub-segments of operation of robots $q$ and $q+1$, is chosen so as to minimize the exploration time of interval $[i,j]$.

%For better understanding we show first the idea of a simple, greedy, dynamic programming algorithm, computing the optimal time of line exploration by a collection of $k$ robots. The algorithm starts from the table $T_{i,j}^{(1)}$ (computed as in Observation~\ref{start-dyn-prog}). The algorithm has $k-1$ steps. In the $r$-th step,  the algorithm computes the table $T_{i,j}^{(r+1)}$, using, previously computed values from $T_{i,j}^{(r)}$ and $T_{i,j}^{(1)}$. In order to compute the value of $T_{i,j}^{(r+1)}$ for any particular values of indices $i < j$ we use the formula

%Each robot $i$ will cover the nodes of indices contained in an interval $[l_i, r_i]$, hence $l_i\leq p_i \leq r_i$. 
We give first an  idea of our algorithm. We generate the \graph, as described in Subsection~\ref{Given initial position:subsec}. Let's use the notation $p_0=-1$ and $p_{k+1}=n$. For $m=1, \dots, k$ let $S_m$ be the subgraph of $S$ obtained by keeping the nodes $(\bar{i},j)$ and $(i, \bar{j})$ such that $p_{m-1} < i,j< p_{m+1}$.
In the first part of our algorithm, for each robot $m$, we run Algorithm~\ref{algo:1RobotLine} with inputs ${p_m}$ and $S_m$, obtaining the optimal exploration time $T_{i,j}$ of each line sub-interval $[i,j]$, which contains exactly one
starting position $p_i$, for $i=1,2,\ldots, k$. 

In the second part of the algorithm, we combine exploration times of individual robots, in order to obtain the optimal exploration time $T_{0,j}$ using robots initially placed within $[0,j]$, subsequently for each $j$. Let $r_j$ denote the number of robots initially placed in interval $[0,j]$ and suppose, that we computed the optimal exploration times of all intervals, which initially contain robots $1,2, \dots, r_j-1$. When $j$ exceeds  $p_{r_j}$ we use robot $r_j$ and we determine the idle edges preceding the intervals of operation of $r_j$, resulting in the optimal exploration times of intervals, which initially contain robots $1,2, \dots, r_j$. 
The formal algorithm (Algorithm~\ref{algo:kRobotsLineFixed}) can be found in the Appendix.

\begin{theorem}
\label{thm:kRobotsLineFixed}
Algorithm~\ref{algo:kRobotsLineFixed} in $O(n^2)$ time computes the optimal exploration of the line by $k$ robots initially placed at given initial positions $0 \leq p_1 < p_2 < \dots < p_k \leq n-1$. 
\end{theorem}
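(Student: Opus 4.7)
The plan is to prove correctness and complexity separately, since the algorithm has a clear two-phase structure (single-robot shortest-time trees followed by a sweep-style dynamic program).

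For correctness of Part 1, I would argue that running Algorithm~\ref{algo:1RobotLine} with starting node $p_m$ on the subgraph $S_m$ correctly computes $T_{i,j}$ for every sub-interval $[i,j]$ with $p_{m-1}<i\le p_m\le j<p_{m+1}$. By Observation~\ref{disjoint}, in an optimal solution robot $m$ neither crosses $p_{m-1}$ nor $p_{m+1}$, so the trajectories of robot $m$ that are relevant for any such interval are exactly those encoded as paths in $S_m$. Algorithm~\ref{algo:1RobotLine} produces an all-targets shortest-time tree, so in a single execution it assigns the correct value $\min(time((i,\bar j)),time((\bar i,j)))$ to every target node in $S_m$; by Theorem~\ref{1fixed source} and formula~(\ref{tij1}) these values equal $T_{i,j}$.

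For correctness of Part 2, I would proceed by induction on $r_j$, the number of robots whose initial position lies in $[0,j]$. The base case $r_j=1$ is exactly what Part 1 delivers. For the inductive step, Observation~\ref{disjoint} guarantees that an optimal schedule on $[0,j]$ using robots $1,\dots,r_j$ partitions $[0,j]$ into disjoint sub-intervals, one per robot, joined by idle edges. The rightmost idle edge $(m-1,m)$ separates robot $r_j$ from the rest, with $p_{r_j-1}<m\le p_{r_j}$. Applying formula~(\ref{tij}) with $q=r_j-1$ gives
\[
T_{0,j}\;=\;\min_{p_{r_j-1}<m\le p_{r_j}}\max(T_{0,m-1},T_{m,j}),
\]
where $T_{0,m-1}$ is available by the inductive hypothesis (it involves only robots $1,\dots,r_j-1$) and $T_{m,j}$ is available from Part 1 (the interval $[m,j]$ contains only $p_{r_j}$). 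The algorithm evaluates exactly this recurrence, so $T_{0,n-1}$ is optimal.

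For the complexity, recall that a snapshot graph on $n'$ nodes has $O({n'}^2)$ vertices and $O({n'}^2)$ arcs (bounded in- and out-degree). Let $d_m=p_m-p_{m-1}$ with the conventions $p_0=-1,\ p_{k+1}=n$, so $\sum_m d_m=n+1$. The subgraph $S_m$ has $O((p_{m+1}-p_{m-1})^2)=O((d_m+d_{m+1})^2)$ elements, and the work done by Algorithm~\ref{algo:1RobotLine} on $S_m$ is proportional to its size. Summing,
\[
\sum_{m=1}^{k}(d_m+d_{m+1})^2\;\le\;4\sum_m d_m^2\;\le\;4\Bigl(\max_m d_m\Bigr)\sum_m d_m\;=\;O(n^2),
\]
so Part 1 runs in $O(n^2)$ time. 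In Part 2, computing $T_{0,j}$ requires scanning the $m$-range $(p_{r_j-1},p_{r_j}]$, of length $d_{r_j}$; summing over the $d_{r_j+1}$ values of $j$ for which $r_j$ is fixed gives $\sum_r d_r\cdot d_{r+1}\le(\sum_r d_r)^2=O(n^2)$.

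The main obstacle I anticipate is not the algebra but the bookkeeping in Part 1: one must check that a single run of Algorithm~\ref{algo:1RobotLine} on $S_m$ does in fact produce the correct shortest-time value at every node $(i,\bar j)$ and $(\bar i,j)$ of $S_m$, not merely at the two extremal target nodes of layer $p_{m+1}-p_{m-1}-2$; this is what lets Part 2 call on $T_{m,j}$ for all valid $(m,j)$ without extra work and is what keeps the total time $O(n^2)$ rather than something larger.
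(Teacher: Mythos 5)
Your proposal is correct and follows essentially the same route as the paper: Part 1 via single-robot runs of Algorithm~\ref{algo:1RobotLine} on the subgraphs $S_m$, Part 2 by induction using the idle-edge decomposition from Observation~\ref{disjoint} and formula~(\ref{tij}), and an $O(n^2)$ bound on the total work. The only cosmetic difference is in the Part-1 complexity accounting, where you sum $(d_m+d_{m+1})^2$ over the gaps while the paper observes that each node of $S$ lies in at most two subgraphs $S_m$, so $\sum_m |V(S_m)| \leq 2|V(S)| = O(n^2)$; both yield the same bound.
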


\subsection{Arbitrary initial positions}

This  algorithm is also based on the dynamic programming approach for computing the table $T_{i,j}^{(r)}$, for all $1 \leq r\leq k$ and $0 \leq i<j \leq n-1$. The values of $T_{0,n-1}^{(k)}$ represent the optimal exploration time of the line using $k$ robots. We use the following formula, which works for any $r, r_1, r_2$, where $r_1,r_2 \geq 1$, $r= r_1+r_2$ and any $0 \leq i<j \leq n-1$.

%For better understanding we show first the idea of a simple, greedy, dynamic programming algorithm, computing the optimal time of line exploration by a collection of $k$ robots. The algorithm starts from the table $T_{i,j}^{(1)}$ (computed as in Observation~\ref{start-dyn-prog}). The algorithm has $k-1$ steps. In the $r$-th step,  the algorithm computes the table $T_{i,j}^{(r+1)}$, using, previously computed values from $T_{i,j}^{(r)}$ and $T_{i,j}^{(1)}$. In order to compute the value of $T_{i,j}^{(r+1)}$ for any particular values of indices $i < j$ we use the formula
\begin{equation}
\label{tijr}
T_{i,j}^{(r)} = \min_{i\leq k\leq j} \max \left( T_{i,k}^{(r_1)} ,T_{k+1}^{(r_2)} \right) .
\end{equation}

Using Formula~\eqref{tijr}, the values of $T_{i,j}^{(r)}$ may be computed in a greedy manner for the increasing values of $r$. As Formula~\eqref{tijr}~may be naturally computed in $O(n)$ time, the total complexity of such a greedy approach is in $O(kn^3)$. 

%The correctness of such algorithm follows from, proven by induction on $r$, optimality of each value $T_{i,j}^{(r)}$. As each value of $T_{i,j}^{(r)} $ may be computed in $O(n)$ time, by $0 \leq i,j\leq n$ and $0 \leq r \leq k$ the  greedy algorithms is completed in $O(kn^3)$ time.

We give now a more efficient algorithm computing $T_{0,n-1}^{(k)}$. Observe first, that when $[i_1,j_1] \subseteq [i_2,j_2]$, then $T_{i_1,j_1}^{(r)} \leq T_{i_2,j_2}^{(r)} $. Consequently, when computing $T_{i,j}^{(r)}$, the value of index $k$ which minimizes $\max (T_{i,k}^{(r-1)} ,T_{i,k+1}^{(1)} )$ may be found by a binary search (cf. function OptTime in the Appendix). 
%Although the formula (\ref{tij}) implies finding the best partition of interval $[i,j]$ to be explored by two subsets of $r-1$ and 1 robot, function OptTime is written in more general way, which permits to find the best partition of interval $[i,j]$ among any two subsets of  $r_1$ and $r_2$ robots, respectively. 

The following observation is easy.
\begin{observation}
\label{obs:logn}
Consider two fixed numbers $r_1, r_2$ of robots. If for any interval $[i,j]$ of the line, $T_{i,j}^{(r_1)}$ and $T_{i,j}^{(r_2)}$ represent the optimal time of exploration of the interval by $r_1$ and $r_2$ robots, respectively, then function OptTime correctly computes in $O(\log n)$ time the optimal exploration time  $T_{i,j}^{(r)}$ of the interval $[i,j]$ by $r=r_1+r_2$ robots.
\end{observation}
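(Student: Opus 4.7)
The plan is to establish two monotonicity properties and then argue that binary search correctly locates the optimal splitting index in logarithmic time.

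First I would record the monotonicity of the table entries with respect to the interval endpoints: whenever $[i_1,j_1]\subseteq[i_2,j_2]$ one has $T_{i_1,j_1}^{(r')}\leq T_{i_2,j_2}^{(r')}$, which was already observed just before the statement (any schedule for the larger interval restricts to a schedule for the smaller one, since the robots within the sub-interval and their required trajectories are a subset of the original task). Specialising this, for fixed $i,j$ and the split index $k$ running from $i$ to $j-1$, the quantity $T_{i,k}^{(r_1)}$ is non-decreasing in $k$ while $T_{k+1,j}^{(r_2)}$ is non-increasing in $k$.

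Next I would use this to analyse the objective $f(k):=\max\bigl(T_{i,k}^{(r_1)},T_{k+1,j}^{(r_2)}\bigr)$ appearing in Formula~\eqref{tijr}. Since $f$ is the maximum of a non-decreasing and a non-increasing function of $k$, it is unimodal: it first decreases (while the second term dominates), reaches its minimum around the crossover point $k^*$ where the two terms are closest, and then increases (once the first term takes over). This is exactly the structure that permits binary search. The procedure OptTime then proceeds by maintaining an interval $[lo,hi]$ of candidate split indices, probing the midpoint $k$, and, depending on whether $T_{i,k}^{(r_1)}\geq T_{k+1,j}^{(r_2)}$ or not, discarding the upper or lower half; the correctness of each discard follows from the monotonicity just stated, because on the discarded half $f$ is no smaller than $f(k)$. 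After $O(\log n)$ steps we obtain a constant-size set of candidates straddling the crossover, and the minimum over them equals the true minimum from Formula~\eqref{tijr}.

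Finally, I would observe that since the values $T_{i,k}^{(r_1)}$ and $T_{k+1,j}^{(r_2)}$ are already tabulated by hypothesis, each probe of OptTime costs $O(1)$, giving the claimed $O(\log n)$ bound. The main subtlety I anticipate is being careful at the boundary: the two monotone sequences need not cross exactly (they are discrete), so one has to argue that the minimum of $f$ is attained at one of the two indices adjacent to the crossover, and that the binary search as written narrows the window down to those indices before returning the smaller of the two values of $f$. Handling that boundary case cleanly, rather than the monotonicity itself, is what requires the most care.
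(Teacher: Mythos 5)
Your argument is correct and follows exactly the route the paper relies on: the containment monotonicity $T_{i_1,j_1}^{(r')}\leq T_{i_2,j_2}^{(r')}$ makes $T_{i,k}^{(r_1)}$ non-decreasing and $T_{k+1,j}^{(r_2)}$ non-increasing in $k$, so the maximum in Formula~\eqref{tijr} is unimodal and binary search over the split index, finished by taking the smaller of the two values at the adjacent candidates $k_{low},k_{high}$, gives the optimum in $O(\log n)$ probes of the precomputed tables. The paper states this observation without a written proof, and your write-up (including the boundary handling at the crossover) is precisely the justification it intends via the function OptTime.
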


The greedy approach would compute the values of table $T_{i,j}^{(r)} $ for any given $r$. Our algorithm below computes the values of $T_{i,j}^{(r)} $ when $r$ is a power of 2 not exceeding $k$. Then, using formula \ref{tijr}, they are combined in $\lceil \log k \rceil$ steps, to compute the values of $T_{i,j}^{(k)}$. 
The formal algorithm (Algorithm~\ref{algo:ArbInit}) can be found in the Appendix.

The following theorem proves the correctness and the complexity of Algorithm~\ref{algo:ArbInit}.

\begin{theorem}
\label{thm:ArbInit}
Algorithm~\ref{algo:ArbInit} computes in $O(n^2 \log n \log k)$ time the optimal time needed by $k$ robots to explore the line.
\end{theorem}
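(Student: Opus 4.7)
The plan is to argue correctness by strong induction on the number of robots, then tally the running-time cost. First I would confirm that Formula~\eqref{tijr} correctly computes $T_{i,j}^{(r)}$ whenever the sub-tables $T^{(r_1)}$ and $T^{(r_2)}$ with $r=r_1+r_2$ are correct: by Observation~\ref{disjoint}, in some optimal schedule the $r$ robots split $[i,j]$ into two contiguous sub-intervals $[i,m]$ and $[m+1,j]$ handled by $r_1$ and $r_2$ robots respectively, so the overall completion time equals the maximum of the two sub-completion times, and Formula~\eqref{tijr} simply ranges over all such splits. The base case $r=1$ is then handled by a single $O(n^2)$-time execution of the variant of Algorithm~\ref{algo:1RobotLine} described in Corollary~\ref{1any source}, which fills the entire table $T_{i,j}^{(1)}$ by reading the time counters of the two \graph~nodes that correspond to completing $[i,j]$.

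The doubling phase produces $T^{(2^m)}$ from $T^{(2^{m-1})}$: for each of the $O(n^2)$ intervals $[i,j]$, one invocation of OptTime with $r_1=r_2=2^{m-1}$ returns $T_{i,j}^{(2^m)}$ in $O(\log n)$ time by Observation~\ref{obs:logn}, so each doubling level costs $O(n^2\log n)$ and the $\lceil\log k\rceil$ levels cost $O(n^2\log n\log k)$ in total. To recover $T_{0,n-1}^{(k)}$ when $k$ is not itself a power of two, I would write $k=2^{b_1}+\cdots+2^{b_\ell}$ in binary with $\ell\le\lceil\log k\rceil$ and fold the precomputed tables $T^{(2^{b_1})},\dots,T^{(2^{b_\ell})}$ together via $\ell-1$ further OptTime passes, each pass again costing $O(n^2\log n)$. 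Summing the base case, the doubling phase, and the assembly phase yields the advertised bound $O(n^2\log n\log k)$.

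The main thing I would be careful about is the hypothesis that makes OptTime run in $O(\log n)$. Its binary search relies on the facts that $T_{i,k}^{(r_1)}$ is non-decreasing and $T_{k+1,j}^{(r_2)}$ is non-increasing in the split index $k$; both follow from the containment-monotonicity $T_{i_1,j_1}^{(r)}\le T_{i_2,j_2}^{(r)}$ stated just before Observation~\ref{obs:logn}. A short induction shows that this containment-monotonicity is closed under the min--max combination in Formula~\eqref{tijr}, so it is preserved through every doubling and assembly step of Algorithm~\ref{algo:ArbInit}. Once this invariant is in hand, every OptTime call is valid, and the cost accounting above delivers the stated complexity.
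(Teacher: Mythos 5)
Your proposal is correct and follows essentially the same route as the paper's proof: the single-robot table from Corollary~\ref{1any source} as the base case, the doubling phase justified by Observation~\ref{obs:logn}, the binary-representation assembly of $T^{(k)}$, and the $O(n^2\log n)$-per-level cost accounting. Your explicit check that the monotonicity needed for OptTime's binary search is preserved through the min--max combinations is a point the paper only asserts informally before stating Observation~\ref{obs:logn}, so it is a welcome (but not divergent) addition.
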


%We now consider a variation of the single robot exploration problem to encompass the possibility of having $k$ robots, with $k < n$. We consider two variations of the problem: the first one imposes the starting positions of the robots, whereas the second one gives us the liberty to choose the starting position of each robot.\footnote{Notice that the latter can easily be adapted to cover cases where some robots have given initial positions, and some robots have arbitrary starting positions. The proof is left as exercise to the reader.}

\section{Line Exploration with Unreliable Collections of Robots}\label{sct:LineFaulty}

In this section we study the exploration problem when some of the robots may be faulty, i.e., when they fail to realize their exploration tasks. In this case, other robots need to help, so that eventually every node of the line is visited by some reliable robot before its deadline. Let there be given a weighted line $L$, containing $n$ nodes with given deadlines and a collection of $k$ robots at most $f$ of which may turn out to be faulty. Consider a schedule for $k$ robots on the line $L$. We say that the schedule is $f$-reliable in time $\Delta$, if for any choice of $f$ faulty robots by an adversary, each node of the line is visited by at least one non-faulty robot before its deadline and before time $\Delta$.  

Note that in the case of the presence of unreliable robots, it might be useful to initially place more than one robot at the same position. Consequently, we will assume that it is admissible for more than one robot to start from the same node of the line.

\begin{observation}
\label{f+1}
If there can be $f$ faulty robots, then to successfully explore a node $v$ with deadline $\Delta(v)$, node $v$ must be visited by at least $f+1$ robots before time $\Delta(v)$. 
\end{observation}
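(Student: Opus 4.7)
The plan is to prove Observation \ref{f+1} by a direct adversarial argument. The contrapositive is easier to handle: I would assume that at most $f$ robots visit node $v$ before time $\Delta(v)$, and show that the adversary can then force the exploration to fail at $v$.

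First, I would fix any schedule for the $k$ robots, and let $R(v) \subseteq \{1,\dots,k\}$ be the set of robots that visit $v$ at some moment $t \le \Delta(v)$. Suppose for contradiction that $|R(v)| \le f$. Since the adversary is allowed to designate \emph{any} subset of up to $f$ robots as crash-faulty, and since $|R(v)| \le f$, the adversary can simply pick the faulty set to be $R(v)$ (padding arbitrarily if $|R(v)| < f$). Under this choice, every robot that would otherwise have reached $v$ in time is unreliable, so no reliable robot visits $v$ before its deadline, contradicting the definition of a successful $f$-reliable exploration.

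Two small points deserve a remark. The adversary is assumed to know the algorithm (and hence the full schedule) in advance, so designating exactly the robots in $R(v)$ as faulty is a legitimate choice. Also, a faulty robot may crash arbitrarily early, so in the worst case one must assume that it contributes no visit to $v$ at all; this is what makes the membership in $R(v)$ worthless once the robot is declared faulty.

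The main (and only) potential obstacle is purely notational: one must be careful that the statement concerns the number of \emph{robots} reaching $v$ in time, not the number of \emph{visits}, since a single robot could pass through $v$ several times. Counting distinct robots as in the definition of $R(v)$ handles this cleanly. Concluding $|R(v)| \ge f+1$ from the contradiction gives exactly the claim.
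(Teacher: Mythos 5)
Your argument is correct: the paper states this observation without proof, treating it as immediate, and your adversarial argument (declare the at most $f$ timely visitors of $v$ to be the faulty set, so that by definition no reliable robot reaches $v$ before $\Delta(v)$) is exactly the intended justification. Your remarks on counting distinct robots rather than visits and on the adversary's full knowledge of the schedule are both appropriate and consistent with the paper's model.
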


It is interesting to look at the {\em decision problem} as well as the {\em optimization problem} related to faulty agents. In the decision problem we look for an algorithm, which, given $f$ and $\Delta$, verifies whether there exists an $f$-reliable schedule in time $\Delta$. In the optimization problem, we need an algorithm, which, for any given $f$, finds the minimal time interval $\Delta$, which admits some $f$-reliable schedule in time $\Delta$. Clearly, solving the optimization problem implies a solution to the decision problem and hardness of the decision problem implies hardness of the optimization problem. We are interested in both settings -- for fixed and for arbitrary initial  positions of the robots. As the case of the arbitrary starting positions is easier we discuss this variant first.

%\subsection{Arbitrary starting positions}

We prove the following theorem.

\begin{theorem}
\label{thm:RingFaulty}
Let there be given a weighted line $L$, containing $n$ nodes with given deadlines and a collection of $k$ robots, which may be put at arbitrary starting positions on $L$. For any $ 0 < f < k$ the optimization problem involving up to $f$ faulty robots may be solved in $O\left(n^2 \log n \log \left\lfloor \frac{k}{f+1} \right\rfloor\right)$ time.
\end{theorem}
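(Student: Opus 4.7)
My plan is to show that the optimal $f$-reliable exploration time with $k$ robots at arbitrary positions coincides with the optimal non-faulty exploration time of $g := \lfloor k/(f+1) \rfloor$ robots at arbitrary positions, and then invoke Theorem~\ref{thm:ArbInit} to compute it.

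For the upper bound I would run Algorithm~\ref{algo:ArbInit} with $g$ robots to obtain optimal starting positions $q_1,\dots,q_g$, trajectories $\tau_1,\dots,\tau_g$, and schedule cost $T_g$. I would then place $f+1$ real robots at each $q_j$ and instruct them to follow $\tau_j$ in lockstep; any $k-g(f+1)<f+1$ leftover robots may be placed arbitrarily. Since the adversary removes at most $f$ robots in total, every one of the $g$ groups retains at least one reliable member, and because $\tau_j$ already respects the deadlines of the nodes it visits (by correctness of Algorithm~\ref{algo:ArbInit}), the resulting schedule is $f$-reliable with cost $T_g$.

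For the lower bound I would start from an arbitrary $f$-reliable schedule of cost $T$, use the line-graph analogue of Observation~\ref{obs-increasingSequence} to associate each robot $i$ with a contiguous interval $[l_i,r_i]$ of nodes visited within time $T$ while respecting all deadlines in $[l_i,r_i]$, and invoke Observation~\ref{f+1} to conclude that every node of $L$ lies in at least $f+1$ of those intervals. The crux is to show that already $g$ of the $k$ intervals cover $[0,n-1]$. I would prove this via LP relaxation: assigning the fractional weight $x_i = 1/(f+1)$ to every interval yields a feasible solution of value $k/(f+1)$ to the interval set-cover LP $\min \sum_i x_i$ subject to $\sum_{i:v\in I_i} x_i \ge 1$ for every node $v$. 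Since the constraint matrix has the consecutive-ones property in every column, it is totally unimodular, so the LP admits an integer optimum, which is therefore at most $\lfloor k/(f+1)\rfloor = g$. The $g$ selected trajectories, used as is, form a non-faulty $g$-robot schedule of cost at most $T$ respecting all deadlines, so $T \ge T_g$.

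Combining the two bounds, the optimum equals $T_g$, which is computed by one call to Algorithm~\ref{algo:ArbInit} on $g$ robots in time $O(n^2 \log n \log g)$, matching the claim. The main obstacle is the lower-bound covering sub-claim; to avoid citing LP theory one could substitute a direct greedy proof that repeatedly selects the interval with the largest right endpoint among those covering the current leftmost uncovered node, and charges $f+1$ distinct original intervals to each selected one, again yielding the bound $g$.
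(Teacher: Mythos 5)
Your overall architecture matches the paper's: both reduce the $f$-reliable problem to the reliable problem with $g=\left\lfloor k/(f+1)\right\rfloor$ robots and then invoke Theorem~\ref{thm:ArbInit}, and your upper bound (replicating each of the $g$ optimal trajectories $f+1$ times) is exactly the paper's sufficiency argument and is correct. The gap is in your lower bound, at the step where you ``associate each robot $i$ with a contiguous interval $[l_i,r_i]$ of nodes visited within time $T$ while respecting all deadlines in $[l_i,r_i]$.'' The set of nodes a robot visits within time $T$ is indeed contiguous, but the subset of those nodes that it visits \emph{before their individual deadlines} need not be: a robot can sweep past a node only after that node's deadline has expired while still serving nodes farther along. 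Concretely, take three unit-spaced nodes $a,b,c$ with deadlines $\Delta_a=1$, $\Delta_b=1/2$, $\Delta_c=2$, and $k=3$, $f=1$. Start robots $1$ and $2$ at $b$ and robot $3$ at $a$; let robot $1$ go to $a$, robot $2$ go to $c$, and robot $3$ go to $c$. Every node is served before its deadline by exactly two robots, so the schedule is $1$-reliable in time $2$; yet robot $3$'s served set is $\{a,c\}$, not an interval, so the consecutive-ones/total-unimodularity hypothesis fails. Worse, no single robot can explore this line at all (it must start at $b$ to meet $\Delta_b$ and then cannot meet both $\Delta_a$ and $\Delta_c$), so no $g=\lfloor 3/2\rfloor=1$ of the trajectories cover the line, and the claimed identity ``optimal $f$-reliable time $=T_g$'' is violated on this instance.

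You should know that the paper's own proof stumbles at exactly the same spot: it asserts that ``we can partition the collection of robots into $f+1$ groups, each group entirely exploring line $L$,'' which is precisely the interval-decomposition claim that the example above refutes when deadlines are heterogeneous and binding. Your write-up is in that sense more honest, since it isolates the single unproved combinatorial fact. In the special case where all deadlines are non-binding (at least $\Delta$), each robot's served set really is the interval of nodes it visits, and then your totally-unimodular LP argument (or the greedy sweep that always picks the covering interval with the largest right endpoint and charges $f+1$ originals to it) gives a clean, correct proof of the lower bound --- arguably tighter than the paper's bare assertion. With binding deadlines, however, what needs repair is the reduction itself, not merely its proof.
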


We now consider the more difficult case of given starting positions.
%
%We now consider the line exploration problem but with faulty robots. An adversary can choose up to $f$ robot to be faulty. A faulty robot does not count for satisfying the deadline at a node, meaning that at least one non-faulty robot must visit a node $v$ before time $\Delta(v)$.
%
%\subsection{Given starting positions}
%
Contrary to the case studied in the previous section, when the robots are assigned to fixed positions on the line, the existence of faulty robots leads to a problem which turns out to be NP-hard. In fact, the decision problem is hard, even in the case when all individual deadlines may be ignored (they are all larger than $\Delta$), i.e. when the line does not have any node time constraints.

\probleme{Exploration of the Line with Crash Faults (\DEL)}
{A line $L$, a multiset $P$ of $k$ starting positions of robots, a number of faults $f$ and a time interval $\Delta$.}{Is there an exploration strategy for the collection of $k$ robots, which may include up to $f$ faulty ones, such that 
each node of $L$ is visited by at least one non-faulty robot before time $\Delta$~?}

We construct a polynomial-time many to one reduction from the Numerical 3-Dimensional Matching problem (\fpartb) which is a strongly NP-hard problem (referenced as [SP16] in \cite{johnson1985np}).

\begin{theorem}\label{th:NP-DEL}
The $\DEL$ decision problem is strongly NP-complete.  
\end{theorem}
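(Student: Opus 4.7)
The plan is to establish NP-completeness in two parts: membership in NP, and NP-hardness by reduction from \fpartb. For membership, I would use Observation~\ref{obs-increasingSequence} to argue that each robot's optimal trajectory is describable by at most $O(n)$ turn points at nodes of $L$, so an entire schedule fits in $O(kn)$ bits. A verifier reconstructs each $r_i(t)$ from the certificate and, for every node $v_j$, counts the distinct robots that reach $v_j$ by time $\Delta$. By Observation~\ref{f+1}, the schedule is $f$-reliable iff every node is covered by at least $f+1$ robots, which is checkable in polynomial time.

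For NP-hardness, I would reduce from \fpartb, which asks, given three $m$-element multisets $A$, $B$, $C$ of positive integers and a target $T$, whether there exist permutations $\sigma,\tau$ of $\{1,\dots,m\}$ with $a_i+b_{\sigma(i)}+c_{\tau(i)}=T$ for all $i$. The construction I have in mind builds a line $L$ partitioned into $m$ identical \emph{gadgets} $G_1,\dots,G_m$ separated by \emph{barriers} long enough that no robot can service two distinct gadgets within time $\Delta$. Inside each gadget sit three designated ``staging positions''---one per class $A$, $B$, $C$---and the distances are calibrated so that dispatching a robot associated with value $a_i$ (resp.\ $b_j$, $c_k$) costs exactly $a_i$ (resp.\ $b_j$, $c_k$) time units of coverage budget inside its assigned gadget. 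The parameter $f$ is set so that, by Observation~\ref{f+1}, each non-trivial node of a gadget must receive $f{+}1$ visits; this forces exactly one robot from each of $A$, $B$, $C$ to be committed to each gadget, and the budget $\Delta$ is tuned so that the resulting gadget is explorable on time iff the three chosen values sum to $T$.

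The main obstacle will be engineering this gadget so that three competing requirements are enforced simultaneously: (i) the $f{+}1$-covering constraint exactly realises the ``one robot per class, per gadget'' partition structure of \fpartb; (ii) the time budget inside a gadget translates the covering task into the arithmetic equality $a_i+b_{\sigma(i)}+c_{\tau(i)}=T$; and (iii) all numeric parameters (edge weights, barrier lengths, $\Delta$) remain polynomial in the \emph{unary} size of the \fpartb\ instance, which is what lets us inherit \emph{strong} NP-hardness. Ruling out unintended strategies---redundant robots covering for absent ones, robots sneaking into neighbouring gadgets, schedules that exploit asymmetries of the barriers---will require a few technical lemmas about optimal movement on the line, likely leaning on Observations~\ref{obs-increasingSequence} and~\ref{disjoint}.

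Once the construction is correct, the equivalence follows cleanly in both directions. A matching of \fpartb\ yields the schedule that sends the triple of robots of types $a_i,b_{\sigma(i)},c_{\tau(i)}$ into gadget $G_i$; conversely, the barrier and coverage constraints force any $f$-reliable schedule in time $\Delta$ to partition the robots into triples (one per class, per gadget) whose associated values must sum to $T$, yielding the desired matching. Combined with membership in NP and the polynomial-unary-size property of the reduction, this establishes strong NP-completeness.
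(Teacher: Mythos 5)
Your membership-in-NP argument is fine and matches the paper's (a polynomial-size description of the trajectories, verified by simulation and counting visits per node). The problem is in the hardness reduction: the gadget-and-barrier architecture cannot encode the permutations $\pi_B,\pi_C$ that are the whole content of \fpartb. On a line with \emph{given} starting positions, once you separate the gadgets $G_1,\dots,G_m$ by barriers that no robot can cross within the budget, every robot is locked into the (at most two) gadgets adjacent to its starting position. So either you place the robot carrying value $b_j$ inside a specific gadget $G_i$ --- in which case the matching $\sigma(i)=j$ is hard-coded into the instance rather than discovered by the schedule --- or you would need every $B$-robot to be within reach of all $m$ gadgets simultaneously, which is geometrically impossible on a line for $m\ge 3$. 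The ``one robot per class per gadget'' structure you want to force therefore collapses to a fixed assignment, and the reduction proves nothing. This is not a matter of engineering the gadget more carefully; the spatial-separation idea is incompatible with the free matching that \fpartb\ requires.

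The paper's construction sidesteps this by using a single undivided line that \emph{every} surviving group of robots must cover in its entirety. All $3q$ robots live on one segment of length $\ell=3I-4S-1$: robot $\cA_i$ at $a_i$, $\cB_i$ at $I+2b_i$, $\cC_i$ at $2I+4c_i$, with $f=q-1$ and $\Delta=I-1$. By Observation~\ref{f+1} each node needs $q$ visits; distance arguments show only $\cA$-robots can reach node $0$, only $\cB$-robots node $I$, and only $\cC$-robots node $\ell$, so the robots decompose into $q$ triples, one per class, each triple tiling $[0,\ell]$ with disjoint intervals. The matching freedom lives precisely in \emph{which} $\cB$- and $\cC$-robot joins which $\cA_i$ to form a tiling triple, and the interval-length arithmetic forces $a_i+b_{\pi_B(i)}+c_{\pi_C(i)}=S$. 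If you want to salvage your plan, replace ``spatial gadgets separated by barriers'' by ``$f{+}1$ overlapping covers of one line,'' which is exactly where the coverage requirement does the combinatorial work. Your observation about keeping all numbers polynomial in the unary size (to inherit \emph{strong} hardness) is correct and is satisfied by the paper's choice of $I$.
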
 
\section{The Ring Environment}\label{ sct:Ring}

In this section we show that most of the results for the line environment may be adapted to work on the ring. However, the $\DEL$ decision problem turns out to have a polynomial-time solution for the ring.

Suppose that the ring $R$ contains nodes $0, 1, 2, \dots, n-1$ in that counterclockwise order around $R$. Then every node $i$ of the ring has a counterclockwise neighbour $(i+1) \mod n$ and a clockwise neighbour $(i-1) \mod n$. Consequently, in this section, all the ring node indices are implicitly taken modulo $n$. The approach used for the ring also starts by creating the snapshot graph, however slightly different from the one introduced in Section~\ref{section:SnapshotGraph}. The nodes of the snapshot graph are of the form $(i,\bar {j})$ and $(\bar {i}, j)$, where the node of the ring marked with the bar denotes the current position of the robot and  $[i,j]$ is the segment of the ring already explored by the robot taken in the counterclockwise direction from $i$ to $j$. Observe that, the terminal nodes of the snapshot graph, i.e. those which correspond to the exploration of every node of the ring, are now all nodes $(i,\bar {j})$ and $(\bar {i}, j)$, such that $(j-i) \mod n = 1$, i.e. $i$ is the counterclockwise neighbour of $j$. Such snapshot graph also has $O(n^2)$ nodes of constant degree (see Fig.~\ref{fig:snapshot-graph-ring} in the Appendix). Consequently, by using the argument from Theorem~\ref{thm:kRobotsLineFixed}~we have the following Observation.

\begin{observation}
\label{cor:RingFixedTij}
All values of $T_{i,j}$ for pairs $(i,j)$, such that each pair denotes a counterclockwise segment around the ring containing an initial position of at most one robot, may be computed in amortized $O(n^2)$ time.
\end{observation}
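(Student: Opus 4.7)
The plan is to mirror the first phase of Algorithm~\ref{algo:kRobotsLineFixed}, suitably adapted to the ring snapshot graph. First, I would fix an arbitrary counterclockwise ordering of the $k$ robots with positions $p_1, p_2, \ldots, p_k$, and for each robot $m$ let $S_m$ be the subgraph of the ring snapshot graph consisting of those nodes $(i,\bar{j})$ and $(\bar{i},j)$ whose counterclockwise segment $[i,j]$ contains $p_m$ but no other $p_{m'}$. Every counterclockwise segment $[i,j]$ containing exactly one initial position then appears as a node of exactly one such $S_m$, while segments containing zero initial positions are assigned $T_{i,j} = \infty$ (since no robot lying inside can explore them), which takes $O(n^2)$ time in aggregate.

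Next, I would run Algorithm~\ref{algo:1RobotLine} on each $S_m$ with source $(p_m,\overline{p_m})$. This yields, for every node of $S_m$, the optimal single-robot time, starting at $p_m$, to reach the corresponding exploration configuration while respecting all node deadlines. The desired value is then obtained exactly as in Observation~\ref{start-dyn-prog}, namely $T_{i,j} = \min\!\bigl(\mathrm{time}((i,\bar{j})),\ \mathrm{time}((\bar{i},j))\bigr)$. Correctness carries over from Theorem~\ref{1fixed source}: on the ring, a single robot that has already explored a contiguous segment $[i,j]$ can only extend it at one of its two endpoints, so the layered shortest-time dynamic programming on $S_m$ is literally the line algorithm applied to the arc $[p_{m-1}{+}1,\, p_{m+1}{-}1]$ viewed as a line.

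The remaining work is the amortized time bound. Let $d_m$ denote the counterclockwise hop-distance from $p_{m-1}$ to $p_m$, so $\sum_{m=1}^{k} d_m = n$. The nodes of $S_m$ correspond to segments $[i,j]$ with $i$ in a set of $d_m$ positions (strictly after $p_{m-1}$ up through $p_m$) and $j$ in a set of $d_{m+1}$ positions (from $p_m$ up through just before $p_{m+1}$), so $|V(S_m)| = O(d_m d_{m+1})$. Because $S_m$ has bounded degree, the shortest-time computation on it takes $O(d_m d_{m+1})$ time, and summing gives
\[
\sum_{m=1}^{k} d_m\, d_{m+1} \;\leq\; \Bigl(\max_{m} d_{m+1}\Bigr)\sum_{m=1}^{k} d_m \;\leq\; n\cdot n \;=\; n^2 .
\]
The main delicacy will be the cyclic bookkeeping: checking that each $(i,j)$ whose counterclockwise segment contains ``at most one robot'' is handled by exactly one $S_m$ and that the wrap-around between $p_k$ and $p_1$ does not cause double counting or omissions. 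Once this is verified, the total running time is $O(n^2)$, which is the claimed amortized bound since there are up to $\Theta(n^2)$ such pairs $(i,j)$.
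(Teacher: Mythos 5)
Your proposal is correct and follows essentially the same route as the paper, which proves this observation simply by invoking the argument of Theorem~\ref{thm:kRobotsLineFixed}: run the first phase of Algorithm~\ref{algo:kRobotsLineFixed} on the ring snapshot graph, extract $T_{i,j}$ as in Observation~\ref{start-dyn-prog}, and bound the total size of the subgraphs $S_m$. Your explicit count $\sum_m d_m d_{m+1} \leq n^2$ is just a direct version of the paper's charging argument that each snapshot-graph node lies in $O(1)$ of the subgraphs $S_m$, so the two analyses coincide.
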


Observe that, there exists an optimal solution for the ring with idle edges between initial positions of consecutive robots. By removing one such edge the ring becomes a line-segment. Consequently, most of our observations for lines may be applied for rings. In particular, for the case of robots which may be placed at arbitrary initial positions on the ring, the following Corollary is obvious.

\begin{corollary}
\label{cor:RingArb}
In $O(n^2 \log n \log k)$ time it is possible to compute the optimal time of exploration of the ring of size $n$ by a set of $k$ robots, which may be placed at arbitrary initial positions.
\end{corollary}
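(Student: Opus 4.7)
The plan is to transplant the arbitrary-starting-positions algorithm for the line (Algorithm \ref{algo:ArbInit} / Theorem \ref{thm:ArbInit}) to the ring by replacing the line snapshot graph with the ring snapshot graph defined above. The proof would proceed in three phases.

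First I would compute $T_{i,j}^{(1)}$ for every counterclockwise arc $(i,j)$ of the ring. Since the ring snapshot graph has $O(n^2)$ nodes of constant in- and out-degree, a single layered pass -- the direct analog, for arbitrary starting position, of the procedure behind Corollary \ref{1any source} and Observation \ref{cor:RingFixedTij} -- propagates shortest times from all zeroth-layer sources simultaneously and produces every $T_{i,j}^{(1)}$ in total time $O(n^2)$.

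Next I would apply the doubling recurrence
$$T_{i,j}^{(r)} \;=\; \min_{i\le c<j}\max\!\left(T_{i,c}^{(r_1)},\, T_{c+1,j}^{(r_2)}\right), \qquad r=r_1+r_2,$$
to ring arcs exactly as Formula~(\ref{tijr}) applies to line intervals. The monotonicity $T_{i_1,j_1}^{(r)} \leq T_{i_2,j_2}^{(r)}$ whenever the arc $(i_1,j_1)$ lies inside the arc $(i_2,j_2)$ still holds, so the OptTime binary search from Observation \ref{obs:logn} still locates the optimal split point $c$ in $O(\log n)$ time per arc. Performing $\lceil \log k \rceil$ doubling rounds, followed by the binary-expansion combination phase used in Theorem \ref{thm:ArbInit}, keeps the total cost at $O(n^2 \log n \log k)$ and delivers $T_{i,j}^{(k)}$ for every arc $(i,j)$.

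Finally I would extract the ring optimum. The observation preceding the corollary guarantees that some optimal ring exploration leaves at least one edge idle, along which the ring can be cut open; the cost of the resulting line exploration for the cut at edge $(i-1,i)$ is precisely $T_{i,i-1}^{(k)}$, the DP value on the full-ring arc $(i,i-1)$. Hence the optimal time is $T^\star = \min_{0 \leq i < n} T_{i,i-1}^{(k)}$, read off from the DP table in an additional $O(n)$ time that is absorbed by the overall $O(n^2 \log n \log k)$ bound.

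The main technical point to verify is that the ring snapshot graph faithfully mirrors the line snapshot graph on each arc -- that is, once the closing edge of an arc is disregarded, the arc's portion of the ring snapshot graph coincides combinatorially with the line snapshot graph of the corresponding line segment, so that both the single-robot shortest paths and the monotone combination rule specialize correctly. This identification is immediate, and with it in hand the analysis of Algorithm \ref{algo:ArbInit} transfers without change, establishing the claimed $O(n^2 \log n \log k)$ bound.
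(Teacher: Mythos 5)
Your proposal matches the paper's argument: the paper likewise proves the corollary by running Algorithm~\ref{algo:ArbInit} on the ring snapshot graph with all (wrapped) pairs $(i,j)$ admitted in lines 5 and 12, relying on the idle-edge observation to reduce the ring to a cut-open segment, exactly as you do. Your write-up merely spells out the single-robot base case, the monotonicity needed for OptTime, and the final minimization over full-ring arcs, all of which the paper leaves implicit.
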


Indeed, it is sufficient to apply Algorithm~\ref{algo:ArbInit}, in which in lines 5 and 12 we consider all pairs $(i,j)$ (rather than pairs for which $i < j$). 

In the case of robots at given initial positions, the adaptation of the line algorithm to the ring case is also relatively easy, with some compromise on its time complexity.
We have the following Proposition.
\begin{proposition}
\label{prop:RingFixedTij}
There exists an $O\left(n^2+\frac{n^2}{k} \log n \right)$ algorithm for computing an optimal exploration of the ring $R$ of size $n$ using $k$ mobile robots, initially placed at fixed positions on $R$.
\end{proposition}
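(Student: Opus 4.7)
The plan is to reduce the ring problem to $O(n/k)$ line instances by exploiting the structure of optimal schedules, and to solve each line instance in near-linear time using pre-computed arc costs.

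First I would apply Observation~\ref{cor:RingFixedTij} to precompute, in $O(n^2)$ amortized time, the table $T_{i,j}$ for every counterclockwise arc $[i,j]$ of the ring containing the initial position of at most one robot. Next I would invoke the ring analogue of Observation~\ref{disjoint}: there is an optimal schedule in which the $k$ robots occupy pairwise disjoint arcs whose union is the entire ring, and hence exactly one idle (never-traversed) edge sits in each of the $k$ cyclic gaps between consecutive initial robot positions. By pigeonhole, the smallest such gap $G^\star$ contains at most $\lfloor n/k \rfloor$ edges.

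For each of the $|G^\star|\le n/k$ candidate positions of the idle edge inside $G^\star$, I would delete that edge and unroll the ring into a line $L'$, keeping the $k$ robots at their original (re-indexed) positions. The remaining task is to partition $L'$ into $k$ contiguous sub-intervals $[l_i,r_i]$, each containing exactly one robot, so as to minimize $\max_i T_{l_i,r_i}$. I would solve this by binary search on the target time $T$: a feasibility test for fixed $T$ runs the greedy sweep ``starting from $l_1=0$, pick the largest $r_i$ with $l_i\le r_i<p_{i+1}$ and $T_{l_i,r_i}\le T$, then set $l_{i+1}=r_i+1$''; monotonicity of $T_{l,\cdot}$ in its second argument makes this greedy choice optimal, and the whole sweep runs in $O(n)$ time. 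With $O(\log n)$ binary-search iterations this yields the optimum of the line instance in $O(n\log n)$ time.

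Aggregating the minimum over the $\le n/k$ line instances gives the overall bound $O(n^2)+O\!\bigl((n/k)\cdot n\log n\bigr)=O\!\bigl(n^2+\tfrac{n^2}{k}\log n\bigr)$, as required. The main thing to justify carefully is that restricting the cut to the smallest gap is without loss of generality: since every optimal schedule of the stated form has some idle edge in \emph{every} gap, in particular it has one in $G^\star$, so cutting there preserves at least one optimal solution. A secondary technical point is carrying out the binary search on $T$ without paying $\Omega(n^2\log n)$ to presort all $T_{i,j}$ entries; this is handled either by bisecting over reals with the polynomially bounded precision of the input weights, or equivalently by observing that on a given line instance only the $O(n)$ threshold values $T_{l_i,r_i}$ encountered along the greedy sweep can decide feasibility, so the search can be driven by those values directly.
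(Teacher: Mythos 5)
Your proposal is correct and follows the paper's decomposition almost exactly: precompute all relevant $T_{i,j}$ once in $O(n^2)$ amortized time (Observation~\ref{cor:RingFixedTij}), locate the smallest gap between consecutive robots (which has $O(n/k)$ edges by pigeonhole), cut the ring there in every possible way, and solve each of the resulting $O(n/k)$ line instances in $O(n\log n)$. The only divergence is in the last step. The paper reuses the second phase of Algorithm~\ref{algo:kRobotsLineFixed} (the dynamic program computing $T_{0,j}$ for increasing $j$) and accelerates the inner minimization over the idle-edge position to $O(\log n)$ via the monotonicity argument of Observation~\ref{obs:logn}, giving $O(n\log n)$ per instance directly, with no search over candidate answers. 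You instead binary-search on the target time $T$ with a greedy feasibility sweep; the sweep itself is sound (the exchange argument using monotonicity of $T_{l,r}$ in both endpoints is standard), but it buys you nothing over the DP and introduces the one soft spot you yourself flag: deciding what domain to bisect over. Note that ``bisecting over reals with polynomially bounded precision'' is not available in this paper's model, where edge weights and deadlines are arbitrary positive reals; your alternative of restricting the search to the $O(n)$ threshold values arising along the sweep can be made rigorous (Megiddo-style), but it needs more care than you give it. Since the paper's DP achieves the same $O(n\log n)$ per line instance without any search on the answer, the cleanest repair of your argument is simply to replace the binary search on $T$ by that DP, after which your complexity accounting $O(n^2)+O\bigl((n/k)\cdot n\log n\bigr)$ matches the claimed bound exactly.
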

%
%\subsection{Unreliable robots on the ring}
We now consider unreliable robots.
Similarly to the line exploration case, every node of the environment must be explored $f+1$ times by different robots before its deadline. 

Consider first the case of robots which may be placed at arbitrary initial positions on the ring $R$. Suppose that we denote by $R^{(f+1)}$ a ring obtained in the following way. We cut $R$ at any node $v$, obtaining a line segment starting and ending by a copy of $v$. We merge $f+1$ copies of such segment, identifying the starting and the ending nodes of consecutive copies, obtaining a segment of $n(f+1)$ nodes. Finally, we identify both endpoints of such segment obtaining a ring $R^{(f+1)}$. Observe that, covering $R$ by $k$ robots' exploration trajectories, so that each node of $R$ is visited $f+1$ times, is equivalent to exploring $R^{(f+1)}$ using $k$ robots, so that each of its nodes is visited (once) before its deadline. As the size of $R^{(f+1)}$ is in $O(nf)$, from Corollary~\ref{cor:RingArb} we get 

\begin{corollary}
\label{cor:RingArbFaulty}
Suppose that in an $n$-node ring we can place at arbitrary initial positions $k$ robots, which may include up to $f$ faulty ones.  In $O(n^2f^2 \log k(\log n + \log f))$ time it is possible to compute the optimal time of exploration of the ring.
\end{corollary}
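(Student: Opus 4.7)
The plan is to reduce the problem directly to the fault-free, arbitrary-initial-positions optimization on the expanded ring $R^{(f+1)}$ introduced just before the statement, and then invoke Corollary~\ref{cor:RingArb}. Recall that $R^{(f+1)}$ has exactly $n(f+1)$ nodes, where each node $u$ of $R$ lifts to $f+1$ copies, each inheriting $\Delta(u)$ as its deadline and preserving the edge weights.

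The core of the argument is the equivalence between the two optimization problems: an $f$-reliable exploration of $R$ in time $\Delta$ by $k$ robots at arbitrary initial positions exists if and only if there is a (fault-free) exploration of $R^{(f+1)}$ in time $\Delta$ by $k$ robots at arbitrary initial positions. By Observation~\ref{f+1}, any $f$-reliable schedule on $R$ must visit every node by at least $f+1$ distinct robots before the node's deadline. Given such a schedule on $R$, I would lift each robot's trajectory to $R^{(f+1)}$ by choosing in which of the $f+1$ sheets of the natural $(f+1)$-fold covering map $\pi\colon R^{(f+1)}\to R$ the trajectory lives; the pre-existing assignment of $f+1$ distinct robots to each node of $R$ translates into a consistent choice of sheets ensuring that every copy in $R^{(f+1)}$ is covered. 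Conversely, given a schedule on $R^{(f+1)}$ in time $\Delta$, the projection $\pi$ sends each robot's trajectory to an arc on $R$, and since every copy of a node $u$ is visited on $R^{(f+1)}$, the corresponding robots visit $u$ on $R$ at least $f+1$ times; the non-crossing form of optimal trajectories (Observation~\ref{disjoint} adapted to the ring) guarantees that these visits are charged to $f+1$ distinct robots.

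With the equivalence in hand, I apply Corollary~\ref{cor:RingArb} to $R^{(f+1)}$: substituting $N=n(f+1)$ and $\log N = O(\log n + \log f)$ into the stated $O(N^2 \log N \log k)$ bound yields $O(n^2 f^2 (\log n + \log f) \log k)$, as claimed.

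The main obstacle will be making the lifting rigorous: specifically, showing that from $f+1$ arcs on $R$ covering each node one can select lifts to the $(f+1)$-fold cover which together disjointly cover every copy, and symmetrically that in an optimal schedule on $R^{(f+1)}$ no single robot's arc covers two copies of the same $R$-node (otherwise the projection would leave some $R$-node with fewer than $f+1$ distinct visitors). I expect this to follow from the disjoint/non-crossing structure of optimal arc trajectories on the ring, combined with the feasibility condition $k \geq f+1$, which controls the average arc length on $R^{(f+1)}$ to be at most $n$ and thus forces every lifted robot's arc to stay within one sheet.
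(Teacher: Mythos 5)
Your proposal follows exactly the paper's route: construct the $(f+1)$-fold ring $R^{(f+1)}$ with $n(f+1)=O(nf)$ nodes, identify $f$-reliable exploration of $R$ with fault-free exploration of $R^{(f+1)}$, and substitute this size into the bound of Corollary~\ref{cor:RingArb} to obtain $O(n^2f^2\log k(\log n+\log f))$. The paper states the lifting/projection equivalence as a one-sentence observation without further justification, so your more detailed discussion of the covering-map correspondence (and the caveats you flag about it) elaborates on, rather than diverges from, the published argument.
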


If the initial positions of the robots on the ring are given in advance, contrary to the case of the line segment, it is possible to decide in polynomial time whether there exists an $f$-reliable schedule in any given time $\Delta$.

\begin{proposition}
\label{prop:RingFixedTij1}
Consider a ring $R$ of size $n$ and $k$ robots placed at given initial positions at the nodes of $S$. For any given time $\Delta$ it is possible to decide in polynomial time whether ring $R$ may be explored by its robots within time $\Delta$.

%There exists $O\left(\frac{n^3f^2}{k}\right)$ algorithm computing an optimal exploration of ring $R$ of size $n$ using $k$ mobile robots, initially placed at fixed positions on $R$.

\end{proposition}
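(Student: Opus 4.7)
The plan is to decide feasibility in two phases: (i) for each robot, precompute the set of arcs it can individually explore in time at most $\Delta$ while respecting the deadlines of all nodes in the arc; then (ii) decide whether one feasible arc per robot can be chosen so that every ring node lies in at least $f+1$ of the chosen arcs. By Observation~\ref{f+1}, condition (ii) is necessary and sufficient for the existence of an $f$-reliable schedule in time $\Delta$.

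For phase (i), I would reuse the ring snapshot graph introduced just before Observation~\ref{cor:RingFixedTij}, building one copy per robot $i$ and rooting it at $(p_i, \bar p_i)$. A single pass of the ring analog of Algorithm~\ref{algo:1RobotLine} on this $O(n^2)$-node graph computes the minimum time $\tau_i(a,b)$ for robot $i$ to explore the counterclockwise arc $[a,b]$ while respecting every internal deadline; arc $[a,b] \ni p_i$ is declared feasible for $i$ when $\tau_i(a,b) \leq \Delta$, at cost $O(n^2)$ per robot. A useful monotonicity then simplifies phase (ii): if $[a',b'] \subseteq [a,b]$ with $p_i \in [a',b']$ and $[a,b]$ is feasible, then $[a',b']$ is feasible as well (a sub-arc can be explored in no more time, with no more deadlines to meet), so it suffices to restrict attention to each robot's Pareto frontier of maximal feasible arcs, which has size $O(n)$.

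Phase (ii) is the core of the argument: pick one Pareto-maximal arc $A_i = [p_i - \ell_i, p_i + r_i]$ per robot $i$ so that every ring node lies in at least $f+1$ of the chosen arcs. I would encode this as a feasibility-flow problem on a polynomial-size auxiliary network built over the $n$-cycle, in which a gadget per robot forces the selection of exactly one of its $O(n)$ candidate arcs, each selected arc injects one unit of flow into every ring position it covers, and each ring position imposes a lower-bound demand of $f+1$ on its total received flow; the lower bounds are absorbed by the standard reduction to a plain max-flow instance, decidable in polynomial time. The main obstacle is to show that this network admits an integer feasible flow exactly when the original instance is a yes-instance; since the analogous line problem is NP-hard by Theorem~\ref{th:NP-DEL}, this polynomiality must exploit the cyclic structure in an essential way, namely that on the ring there is no "endpoint asymmetry" allowing one to embed a numerical three-dimensional matching instance, whereas coverage propagates uniformly around the cycle, and the associated constraint matrix aligns with polynomially solvable circular-arc covering. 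Verifying this integrality carefully, either via total unimodularity of the extracted sub-LP after cutting the ring at an arbitrary position or via a direct combinatorial exchange argument, is the step where I expect the proof to require the most care.
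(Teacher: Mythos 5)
Your phase-(i) setup and the reduction of the decision problem to ``choose one feasible arc per robot, containing its start, so that every ring node lies in at least $f+1$ of the chosen arcs'' are correct, and this is also where the paper starts. The genuine gap is phase (ii): the entire difficulty of the proposition is concentrated in the integrality of your flow encoding, and you leave it open. That step is not a routine verification. Constraint matrices of circular-arc covering are not totally unimodular (cutting the ring at a position does not repair this, since arcs straddling the cut reappear as two pieces coupled through the ``exactly one arc per robot'' gadget), and those per-robot selection gadgets destroy the consecutive-ones structure that gives integrality for interval systems on a line. Your own observation that the line version is strongly NP-hard (Theorem~\ref{th:NP-DEL}) already shows that no generic flow/LP argument can succeed; some ring-specific combinatorial fact must be isolated and proved, and the proposal does not say what it is.

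The paper closes exactly this step by an elementary route that avoids LP integrality altogether. It unrolls the ring into $R^{(f+1)}$, the concatenation of $f+1$ copies of $R$, so that covering each node of $R$ at least $f+1$ times by $k$ arcs becomes partitioning the larger ring into $k$ single-coverage arcs separated by idle edges. Since an idle edge projects onto one of the $n$ edges of $R$, it suffices to try all $n$ cut positions, each yielding a segment $S_m$ of length $n(f+1)-1$ to be covered by $k$ feasible arcs. Each such line instance is then decided greedily: one precomputes, via the ring snapshot graph, the farthest-reach value $P(i)$ (the largest $j$ such that some permitted starting position admits a feasible exploration of $[i,j]$ in time $\Delta$) and iterates $i^{r+1}=P(i^{r}+1)$ for $r=1,\dots,k$, with correctness following from a standard exchange/induction argument. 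To complete your write-up you would either need to prove integrality of your network for this specific arc structure --- which in effect amounts to rediscovering the unrolling-plus-greedy argument --- or replace phase (ii) by such a direct combinatorial procedure.
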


\section{NP-hardness for Star Graphs}

We gave exploration algorithms for lines and rings with time constraints on the nodes. It is easy to see that the exploration problem is hard for graphs, even for the case of a single robot and a graph with edges of unit length. Indeed, for a graph on $n$ nodes, by setting all its node deadlines to $n-1$, an instance of exploration problem is equivalent to finding a Hamiltonian path. However, we show below that the exploration problem is hard for graphs as simple as stars and already for two mobile robots. 
We construct a polynomial-time reduction from the Partition Problem~\cite{garey2002computers}.

\begin{proposition}\label{prop:star}
The exploration problem respecting node deadlines for given starting positions of the robots  is NP-hard. This problem is also NP-hard if the starting positions are arbitrary.
\end{proposition}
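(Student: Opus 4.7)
The plan is to give a polynomial-time many-one reduction from the NP-hard Partition problem. Given positive integers $a_1,\dots,a_n$ with $\sum_i a_i = 2S$ (I may assume $0 < a_i < S$ for each $i$, since otherwise the instance is trivial), I would construct a star $G$ with centre $c$, $n$ ordinary leaves $v_1,\dots,v_n$ joined to $c$ by edges of weight $a_i$ and carrying deadline $2S$, and two heavy leaves $u_1,u_2$ joined to $c$ by edges of weight $S$ and carrying deadline $3S$. Two robots are placed at $c$ and the target exploration time is $3S$. The claim is that this exploration instance is solvable iff the Partition instance is.

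The easy direction is immediate: given $A_1 \cup A_2 = \{1,\dots,n\}$ with $\sum_{i \in A_j} a_i = S$, let robot $j$ visit the leaves of $A_j$ in any order and then cross to $u_j$; every ordinary leaf is then reached by time at most $2S$ and $u_j$ at time exactly $3S$.

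For the substantive direction I would argue in three steps. First, each robot must visit exactly one heavy leaf: a single robot handling both $u_1$ and $u_2$ spends at least $3S$ on the two round trips, leaving the other robot to cover every $v_i$ and so to reach its last ordinary leaf at time at least $4S - \max_i a_i > 2S$, violating a deadline. Second, no robot visits an ordinary leaf after its first arrival at its assigned $u_j$: any such leaf would be reached no earlier than $S + S + a_i > 2S$, again past its deadline. Third, writing $A_j$ for the leaves serviced by robot $j$, the two sets partition $\{1,\dots,n\}$, and the deadline $3S$ of $u_j$ applied to the ``leaves-then-$u_j$'' schedule forces $2 \sum_{i \in A_j} a_i + S \le 3S$, i.e.\ $\sum_{A_j} a_i \le S$; summing over $j$ and using $\sum_i a_i = 2S$ yields equality and produces the desired partition.

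For the arbitrary-starting-position variant I would reuse the same instance. A robot placed at a leaf $v_i$ (respectively $u_j$) must first traverse its incident edge of weight $a_i$ (respectively $S$) before it can act from the centre, and the ``saving'' from its initial leaf being pre-visited is exactly compensated by this delay; propagating the delay into the inequality $\sum_{A_j} a_i \le S$ of the third step strictly shrinks the two robots' combined leaf budget below $2S$, contradicting the requirement to cover every $v_i$. Hence the only placement that can succeed is both robots at $c$, and NP-hardness transfers. The main obstacle is the case analysis in the backward direction, where a robot could in principle interleave arbitrarily many trips among $c$, the $u_j$'s and the $v_i$'s; the uniformly useful tool is the strict bound $\max_i a_i < S$, which prevents any single large weight from absorbing an imbalance between the two robots' loads.
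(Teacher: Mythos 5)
Your reduction for the fixed-starting-position case is correct, but it takes a genuinely different route from the paper's. The paper builds a star with \emph{four} heavy pendant edges of weight $4\sigma$, gives every node the same deadline $10\sigma$, and starts the two robots at two of the heavy leaves; the backward direction is then a single total-walk-length count: the forced traversal cost is exactly the budget $2\times 10\sigma=20\sigma$, so each robot's share of the light edges must sum to $\sigma$. You instead use two tiers of deadlines ($2S$ on the ordinary leaves, $3S$ on the two heavy leaves) with both robots at the centre, and let the early deadline force the ``leaves first, then $u_j$'' structure, after which $2\sum_{A_j}a_i+S\le 3S$ does the work. Your version is arguably cleaner for the fixed-position case; the only presentational caveat is in your first step, where you should say explicitly that a robot visiting both $u_1$ and $u_2$ not only spends $3S$ on them but consequently cannot fit in any ordinary leaf, since any detour pushes its arrival at the second heavy leaf past $3S$ --- which is clearly the argument you intend.

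The arbitrary-starting-position part, however, has a genuine gap. The claim that the saving from the initial leaf being pre-visited is ``exactly compensated'' by the delay of returning to the centre is false: a robot starting at $v_i$ gets $v_i$ covered for free and pays only $a_i$ to reach $c$, whereas from $c$ it would have paid $2a_i$ for the round trip, a net gain of $a_i$. Concretely, take $a_1=a_2=a_3=2$, so $2S=6$ and Partition has no solution (all subset sums are even), and place robot $1$ at $v_1$ and robot $2$ at $v_2$. Robot $1$ covers $v_1$ at time $0$, reaches $c$ at time $2$, visits $v_3$ at time $4\le 2S$, returns to $c$ at time $6$ and reaches $u_1$ at time $9=3S$; robot $2$ covers $v_2$ at time $0$ and reaches $u_2$ at time $5$. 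Every deadline is met, so your instance becomes a YES-instance under arbitrary placement although Partition answers NO; the reduction does not transfer as stated, and in particular it is not true that only the placement with both robots at $c$ can succeed. The paper's construction survives arbitrary placement precisely because of its four heavy edges: two robots have only four walk-endpoints in total, so at least the four edges of weight $4\sigma$ already consume $16\sigma$ of the $20\sigma$ budget, every placement other than one robot at each of two distinct heavy leaves overshoots $20\sigma$, and the resulting tightness forces the partition exactly as in the fixed case. To repair your variant you would need a comparable device (e.g.\ extra heavy pendant edges whose traversal cost eats up any gain from starting at an ordinary leaf) rather than the compensation argument.
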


\section{Conclusion and Open Problems}

We studied the question of exploring graphs with time constraints by collections of unreliable robots. When all robots are reliable we used dynamic programming to give efficient exploration algorithms for line graphs and rings. We showed, however, that the problem is NP-hard for graphs as simple as stars. We showed how to extend, in most cases, our solutions to unreliable collections of robots. One of our results is quite unexpected and important. Suppose that a collection of robots, placed on a line, may contain an unknown subset of robots (of bounded size), which turn out to be crash faulty. Verifying whether it is possible to explore the line within a given time bound is an NP-hard problem. The same problem on the ring has a polynomial-time solution.

An interested reader  may observe that our positive results imply the possibility to compute the {\em resilience} of the configuration, i.e. given a time $\Delta$, to find the largest value $f$, such that there exists a schedule assuring exploration when any set of $f$ robots turns out to be unreliable.

In our paper, we did not actually produce schedules for our robots, but we only computed the optimal times when such schedules may be completed. However, from our work it is implicitly clear how to generate such schedules. We proved the optimality of the schedules but we did not prove the optimality of our algorithms. One of the possible open problems is to attempt to design algorithms of better time complexity. 

Several other open problems remain, especially those related to feasibility questions. The exploration of the star graph by a single robot may be done in polynomial time. For example, an interested reader may observe, that it is sufficient to visit each star node $v_i$ of degree one in increasing order of the value $\Delta_i+w_i$, where $\Delta_i$ and $w_i$ are, the deadline of node $v_i$ and the weight of its incident edge, respectively. It is possible to show that, either such schedule is feasible, or there does not exist any feasible schedule. Is it possible to extend the algorithm for single robot in a star for some larger class of graphs? How is a tree with node deadlines explored by a single robot? Or, conversely, what is the smallest (or simplest) class of graphs for which the exploration by a single robot is hard?

%
%\begin{proof}
%By Lemma~\ref{lemma:1RobotStar}, and by sorting every node in order of $\Delta_i + w_i$, which requires time $n log n$.
%\end{proof}
%
%\

\newpage
\bibliographystyle{plain}
\bibliography{refs}
\thispagestyle{empty}

\newpage
\appendix

\pagenumbering{roman}
\section*{Appendix}

\section{Illustration of a snapshot graph}
\begin{figure} [!htb]
\begin{center}
\includegraphics[width=0.75\textwidth]{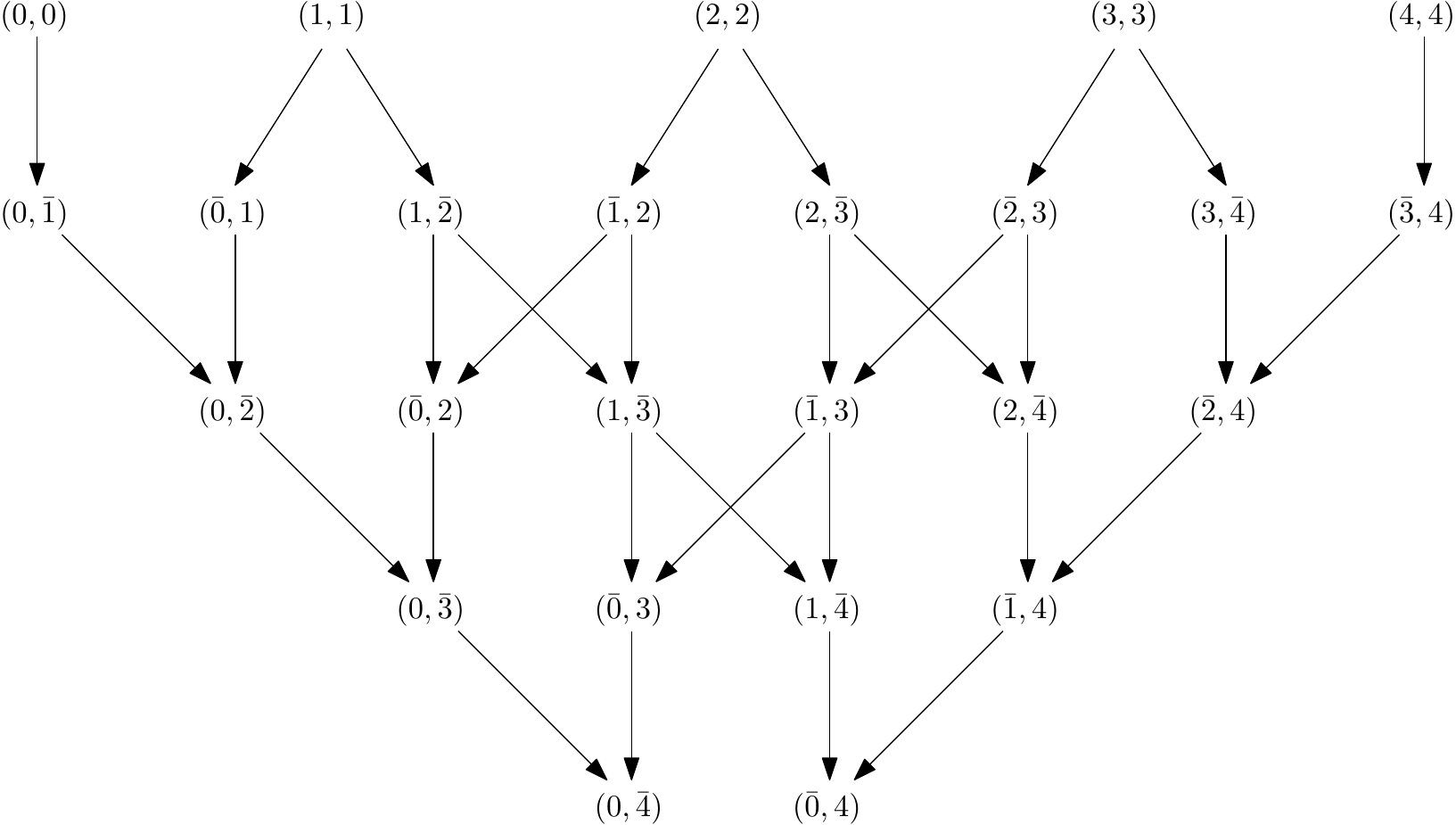}
\end{center}
\caption{A depiction of a \graph~for the case of line $L$ consisting of five nodes.}
\label{fig:snapshot}
\end{figure}

For clarity, we do not show the edge weights of the \graph ~$S$~ in Fig.~\ref{fig:snapshot}. Notice that, for any line graph $L$, the weights of the directed edges $(i,\bar {j})\rightarrow (i,\overline {j+1})$ and  $(\bar{j},k)\rightarrow (\overline{j+1},k)$ in its \graph~are equal to the weight of the edge $(j,j+1)$ in the line graph $L$. Similarly, the weights of the directed edges  $(i-1,\bar {j})\rightarrow (\bar {i},j)$ and  $(\bar {i},j-1)\rightarrow (i,\bar {j})$  in the \graph~$S$~are equal to the weight of the path $i  \rightsquigarrow j$ in the line graph $L$.

\section{Illustration of the execution of Algorithm~\ref{algo:1RobotLine} for a given initial position of the robot}
Figure \ref{fig:snapshot-tree} illustrates the execution of Algorithm~\ref{algo:1RobotLine}. The weighted line graph containing five nodes denoted by integers from $0$ to $4$ is presented at the top of Fig. \ref{fig:snapshot-tree}. The robot is initially placed at node 1. The solid directed edges depict the \tree~respecting the deadlines (the remaining edges of the \graph~which are not being used are dashed). Each node has been assigned the time counter computed by Algorithm~\ref{algo:1RobotLine}.
The path of the \tree~ending in the target node represents the optimal trajectory of the robot. 

\begin{figure} [!htb]
\begin{center}
\includegraphics[width=0.75\textwidth]{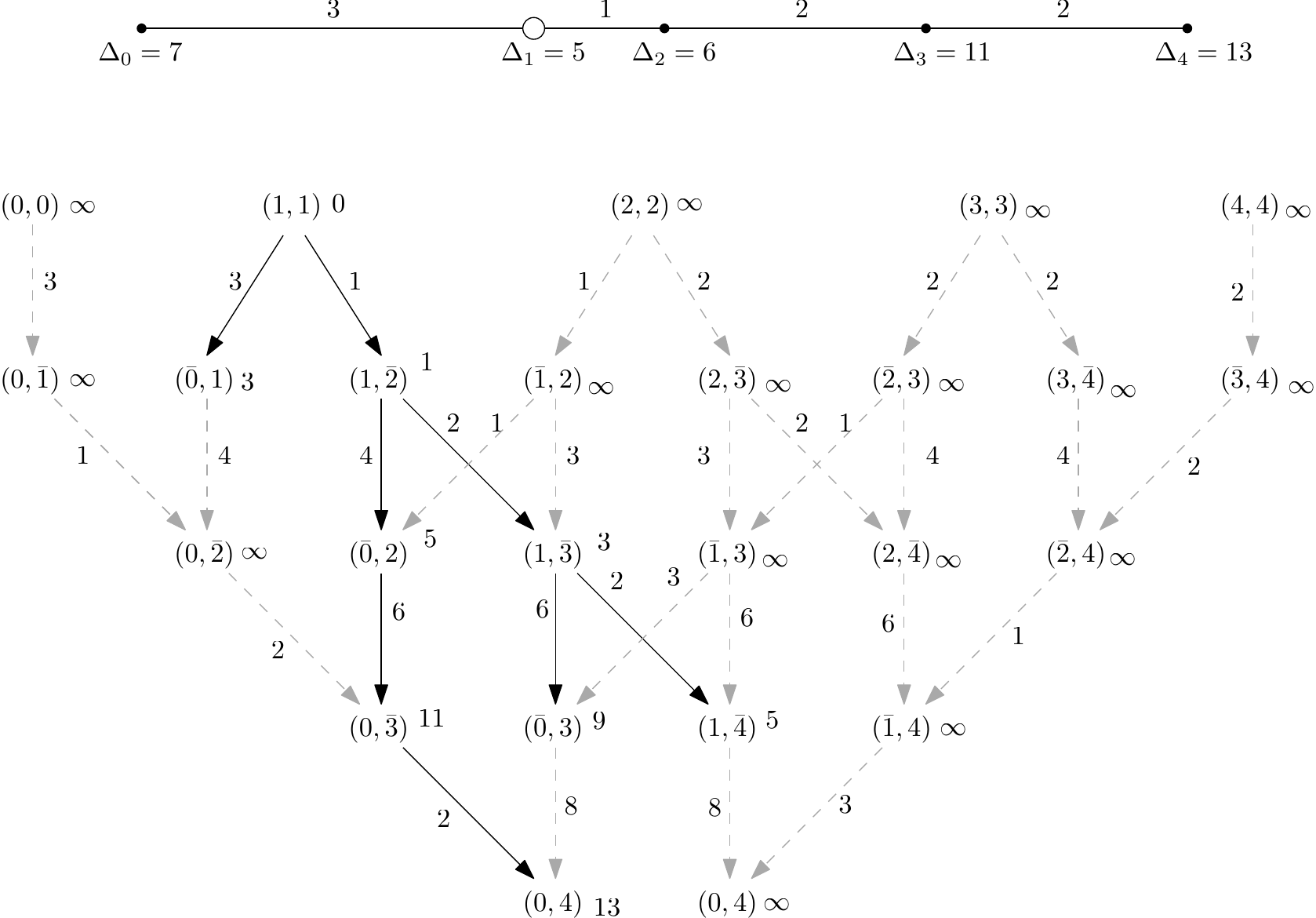}
\end{center}
\caption{Illustration of the execution of the line-exploration algorithm starting from node 1.}
\label{fig:snapshot-tree}
\end{figure}

\section{Proof of Theorem~\ref{1fixed source}}
\begin{proof} %(Theorem~\ref{1fixed source})
We show that for every node $v=(\bar{i},j)$ (resp. $v=(i, \bar{j})$) of the \graph~the algorithm computes the shortest time $time (v)$ needed to explore the interval $[i,j]$ of the line graph.which respects the deadlines of its nodes by the robot starting at $v_s$, such that $i \leq s \leq j$, and ending its exploration at $v_i$ (resp. $v_j$). The proof goes by induction on the layer. The claim is clearly true for any node $(i, \bar{i})$ at layer $0$. Suppose that the claim is true for all nodes at layers at most $\ell-1$. Take any node at level $\ell$, i.e., either $v= (\bar{i}, i+\ell)$ or $v=(i, \overline{i+\ell})$ . Consider the shortest time exploration path ending at $v$. The immediate predecessors of $v$ in this path is a node $w$ from layer $\ell -1$, for which the shortest-time exploration path is correctly computed by the inductive hypothesis.  The time needed to reach $v$ from $w$ equals the time distance between $new (v)$ and $new (w)$ at the line graph. If $time (v) + weight (w \to v) \leq \Delta_{new(w)}$ then the deadline of node $new (w)$ is respected and $time (w)$ is correctly computed lines $4$-$6$ of Algorithm~\ref{algo:1RobotLine}, otherwise the exploration time of $w$ remains at $time (w) = \infty$ as set in the InitStart procedure. The $O(n^2)$ time complexity follows directly from the properties of the \graph.  
\qed
\end{proof}

\section{Illustration of the execution of the modified Algorithm~\ref{algo:1RobotLine} for an arbitrary starting position}
An example of the execution of Algorithm~\ref{algo:1RobotLine} (modified as described in the subsection) for an arbitrary starting position is presented on Fig.~\ref{fig:snapshot-tree-any-source}, where a user may choose any node of the line graph as the starting position of the robot.

\begin{figure} [!htb]
\begin{center}
\includegraphics[width=0.75\textwidth]{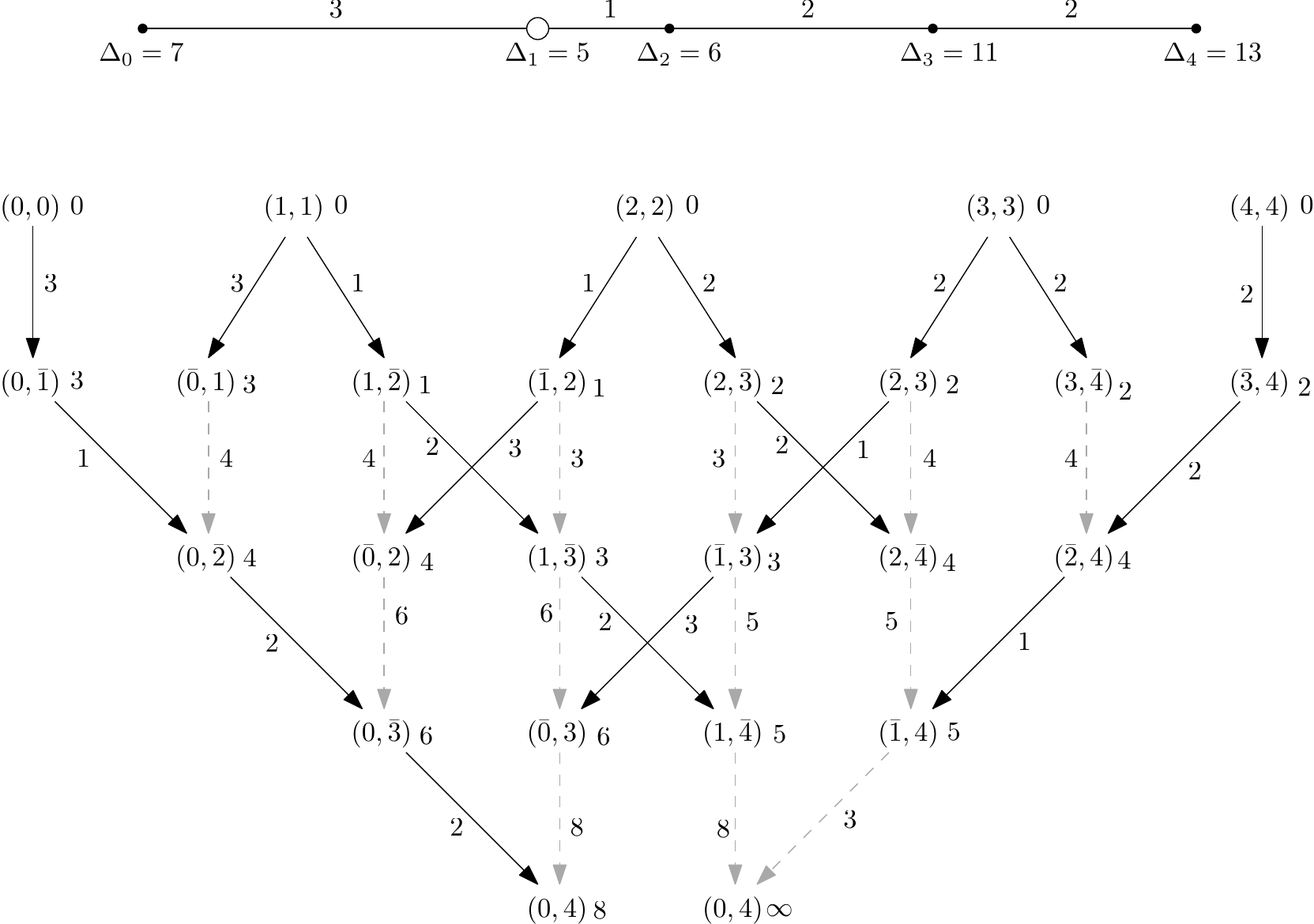}
\end{center}
\caption{Illustration of the execution of the line-exploration algorithm in the case of arbitrary starting  node. For any sub-interval $[i,j]$ of the line, the optimal robot trajectory exploring $[i,j]$ is given by the cheaper among the solid directed paths incoming to nodes $(i, \bar j)$ and $(\bar i, j)$.}
\label{fig:snapshot-tree-any-source}
\end{figure}

\section{Proof of Corollary~\ref{1any source}}
The proof of Corollary~\ref{1any source} is almost identical to the case of Theorem~\ref{1fixed source}. Observe that, in the inductive step, when the parent $w$ of any node $v$ in $T$ is determined, the root of the connected component of $T$ containing $v$ corresponds to the node of $A$ offering the shortest exploration path.

%\section{Proof of Theorem~\ref{1fixed source}}
%\begin{proof} %(Theorem~\ref{1fixed source})
%We show that for every node $v=(\bar{i},j)$ (resp. $v=(i, \bar{j})$) of the \graph~the algorithm computes the shortest time $time (v)$ needed to explore the interval $[i,j]$ of the line graph.which respects the deadlines of its nodes by the robot starting at $v_s$, such that $i \leq s \leq j$, and ending its exploration at $v_i$ (resp. $v_j$). The proof goes by induction on the layer. The claim is clearly true for any node $(i, \bar{i})$ at layer $0$. Suppose that the claim is true for all nodes at layers at most $\ell-1$. Take any node at level $\ell$, i.e., either $v= (\bar{i}, i+\ell)$ or $v=(i, \overline{i+\ell})$ . Consider the shortest time exploration path ending at $v$. The immediate predecessors of $v$ in this path is a node $w$ from layer $\ell -1$, for which the shortest-time exploration path is correctly computed by the inductive hypothesis.  The time needed to reach $v$ from $w$ equals the time distance between $new (v)$ and $new (w)$ at the line graph. If $time (v) + weight (w \to v) \leq \Delta_{new(w)}$ then the deadline of node $new (w)$ is respected and $time (w)$ is correctly computed lines $4$-$6$ of Algorithm~\ref{algo:1RobotLine}, otherwise the exploration time of $w$ remains at $time (w) = \infty$ as set in the InitStart procedure. The $O(n^2)$ time complexity follows directly from the properties of the \graph.  
%\qed
%\end{proof}
\section{Proof of Theorem~\ref{thm:kRobotsLineFixed}}

\begin{proof}
By Observation \ref{disjoint}, we can assume that in an optimal solution, each robot $m$ operates in an interval $[l_m,r_m]$, which does not contain an initial position of any other robot. Hence we have $p_{m-1}< l_m \leq p_m$ and $p_{m} \leq r_m < p_{m+1}$. All pairs of indices $(i,j)$, which verify this property are considered in line 4 of Algorithm \ref{algo:kRobotsLineFixed}.
By Theorem~\ref{1fixed source}~and Observation \ref{start-dyn-prog}~each such value $T_{i,j}$ is correctly computed in line 5 of Algorithm \ref{algo:kRobotsLineFixed}.

We now prove that in line 7 the Algorithm \ref{algo:kRobotsLineFixed} correctly computes values  $T_{0,j}$ for  all $j=0,1, \dots, n-1$. The proof goes by induction on $j$. For all $0 \leq j  < p_2$ the interval  $[0,j]$ contains a single robot, so the value $T_{i,j}$ is correctly computed in the first iteration of the for-loop in lines 2-5. Consider any $j \geq p_2$, i.e. when the interval $[0,j]$ contains more than one robot, and suppose, by the inductive hypothesis, that the values $T_{0, i}$ correspond to optimal times of exploration of segments $[0,i]$, for all $i<j$ . Let $T^*$ be the optimal time of exploration of interval $[0,j]$, which verifies the claim of Observation \ref{disjoint}, i.e. such that there exists an idle edge $(m^*-1, m^*)$, and $p_{r_{j-1}}< m^* \leq p_{r_j}$. During such optimal exploration, robots $1,2, \dots, r_j-1$ explore interval $[0, m^*-1]$ (using some time $T_1^*$), and robot $r_j$ explores interval $[m^*,j]$ (using time $T_2^*$). Clearly $T^* = \max (T_1^*, T_2^*)$. By the inductive hypothesis, we have $T_{0,m^*} \leq T_1^*$  and $T_{m^*+1,j} \leq T_2^*$. Consequently, we have in line 7 of Algorithm~\ref{algo:kRobotsLineFixed}
\begin{align*}
T_{0,j} &= \min\limits_{p_{r_{j-1}}< m \leq p_{r_j}} \max \, \left(T_{0,m-1}, T_{m,j} \right) 
 \leq  \max \, (T_{0,m^*-1}, T_{m^*,j})
 = \max (T_1^*, T_2^*) 
 = T^*,
\end{align*}
which concludes the inductive proof.

We consider now the time complexity of Algorithm~\ref{algo:kRobotsLineFixed}. The snapshot graph $S$ in line 1 is constructed in $O(n^2)$ time. Observe that since each node of $S$ can only be in two different subgraphs $S_i$ and $S_j$, we have $\sum_{i=1}^{k}| V(S_i)|\leq 2|V(S)|= O(n^2)$. Hence, all the executions of line 3 of Algorithm~\ref{algo:kRobotsLineFixed} take $O(n^2)$ amortized time. Similarly, in line 4 of the algorithm, in all its executions, it considers $O(n^2)$ nodes of graph $S$. Consequently the for-loop of lines 2-5 is executed in $O(n^2)$ amortized time. As each of $O(n)$ executions of the for-loop in lines 6-7 takes $O(n)$ time we conclude the $O(n^2)$ overall time complexity of Algorithm~\ref{algo:kRobotsLineFixed}. 
\qed
\end{proof}

\begin{algorithm} [!ht]
\caption{Exploration algorithm on the line with $k$ robots at fixed initial positions\label{algo:kRobotsLineFixed}}
\KwIn{Line $L$ with starting robots' positions $p_1, p_2, \dots p_k$}
Construct the snapshot graph $S$ from $L$\;
\For{$m=1$ to $k$}{
	Execute Algorithm~\ref{algo:1RobotLine} with inputs $p_m$ and $S_m$\;
	\For{every $(i,j)$ s.t. $p_{m-1}< i\leq j< p_{m+1}$}{
		$T_{i,j} := \min \{ t_m(i, \bar{j}) , t_m(\bar{i}, j ) \}$\;
	}
}
\For{$j=p_2$ to $n-1$}{
	$T_{0,j}:= \min\limits_{p_{r_{j-1}}< m \leq p_{r_j}} \max \, (T_{0,m-1}, T_{m,j})$\;
}
\end{algorithm}

%\section{Function OptTime and Algorithm~\ref{algo:ArbInit}}

\begin{function} [!ht] 
\caption{ OptTime($i,j,r_1, r_2$); }
\If {$j-i+1 \leq r_1+r_2$} {\bf{return} 0}
$k_{low} = i; k_{high} = j$\; 
\While{$k_{low} < k_{high+1} $}{
         $k=(k_{low}+k_{high})/2$\;
	\If {$T_{i,k}^{(r_1)} < T_{k+1,j}^{(r_2)}$}{
			 $k_{low}=k$		
		}
		\Else {$k_{high}=k$}
}
{\bf{return} $\min(  \max (T_{i,k_{low}}^{(r_1)} ,T_{k_{low},j}^{(r_2)} ), \max (T_{i,k_{high}}^{(r_1)} ,T_{k_{high},j}^{(r_2)} ))$;}
\end{function}

\begin{algorithm} [!ht] 
\caption{; Multiple robot line exploration with arbitrary starting positions \label{algo:ArbInit}}
Let $r_b, r_{b-1}, \dots, r_0$ be the consecutive bits of the binary representation of $k$\;
Compute table $T_{i,j}^{(1)} $ \;
\For{$m=0$ to $b$}{
	$r=2^m$\;
	\For{all pairs $(i,j)$ such that $0\leq i <j \leq n-1$}{
	$T_{i,j}^{(2r)} =$ OptTime$(i,j,r, r)$\;
	}
}
$r = 2^{b}$\;
\For{$m=1$ to $b$}{
	\If{ $r_{b-m}=1$}{
		$p=2^{b-m}$\;
	\For{all pairs $(i,j)$ such that $0\leq i <j \leq n-1$}{
		$T_{i,j}^{(p+r)} =$ OptTime$(i,j,p, r)$\;
		$r=p+r$\;
	}
	}
}
\end{algorithm}

\section{Proof of Theorem~\ref{thm:ArbInit}}

\begin{proof} 
By Corollary~\ref{1any source} and Formula~\eqref{tij} the usage of Algorithm~\ref{algo:1RobotLine}~in line 2 of Algorithm~\ref{algo:ArbInit} correctly computes a single robot optimal exploration time for any sub-interval of a given line. By induction on $r$, using Observation~\ref{obs:logn}, lines 3-6 of Algorithm~\ref{algo:ArbInit} correctly compute the optimal exploration time of any interval $[i,j]$ using $2^m$ robots, for any $m$, such that $2^m < r$.  

From line 1  we have $k=r_b2^{b}  + r_{b-1}2^{b-1}+\dots+r_02^0$, where $b$ is the position of the first 1-digit in the binary representation of $k$. We prove that, at the start of each iteration of the {\em for} loop from line 8, we have 
\begin{enumerate}
\item $r=k - k \mod 2^{b+1-m}$, and  
\item the table $T_{i,j}^{(r)}$ has been already computed for all $0\leq i <j \leq n-1$.
\end{enumerate}
The proof goes by induction on $m$. At the start of the first iteration of the loop when $m=1$, we have $r=2^b$. Then indeed the inductive condition is verified as 
$$
k- k \mod 2^{b+1-1} = (r_b2^{b}  + r_{b-1}2^{b-1}+\dots+r_02^0) -  (r_{b-1}2^{b-1}+r_{b-2}2^{b-2}+\dots+r_02^0) = 2^b = r
$$
and the value of $T_{i,j}^{(2^b)}$ was computed previously in line 6 of the algorithm. 

Suppose that the inductive condition was verified at the beginning of the $m$-th iteration. Suppose first that $r_{b-m}=0$. Then the $i$-th iteration of the loop is empty but as $k \mod 2^{b+1-m} = k \mod 2^{b+1-(m+1)}$, so that at the beginning of the next iteration the value of $r$ remains unchanged, it follows that the inductive condition is verified.

Consider now the case when $r_{b-m}=1$. Then, between the start of the $m$-th and the $(m+1)$-st iteration of the loop in lines 10 and 13 we have $r:=r+ 2^{b-m}$. Consequently, by the inductive assumption, we have at the beginning of the $(m+1)$-st iteration

\begin{align*}
r &= k - k \mod 2^{b+1-m} + 2^{b-m}\\
&= (r_b2^{b}  + r_{b-1}2^{b-1}+\dots+r_02^0) -  (r_{b-m}2^{b-m}+\dots+r_02^0) +2^{b-m} \\
&= (r_b2^{b}  + r_{b-1}2^{b-1}+\dots+r_{b-m}2^{b-m})=  k - k \mod 2^{b+1-(m+1)}
\end{align*}
The value of the table $T_{i,j}^{(r)}$ is then computed in line 12 of the algorithm, which completes the induction proof.

From the inductive proof it follows that at the end of the $b$-th iteration of the {\em for} loop from line 8 (i.e. at the beginning of the non-existing $(b+1)$-st iteration) we have $r=k - k \mod 2^{b+1-(b+1)}=k$, and  
the table $T_{i,j}^{(r)}=T_{i,j}^{(k)}$ has been computed, which completes the proof of the correctness of the algorithm.

In line 2, the table $T_{i,j}^{(1)} $ may be computed by Algorithm~\ref{algo:1RobotLine} in $O(n^2)$ time (cf. Fig.~\ref{fig:snapshot-tree-any-source}).
As $r < 2^b$, both {\em for} loops starting at line 3 and 8 have $O(\log k)$ iterations. Since each internal  {\em for} loop from line 5 and 11, respectively, has $O(n^2)$ iterations calling function OptTime of complexity $O(\log n)$ we conclude that Algorithm~\ref{algo:ArbInit} finishes in $O(n^2 \log n \log k)$ time.
\qed
\end{proof}

\section{Proof of Theorem~\ref{thm:RingFaulty}}
\begin{proof}
Let $\Delta$ be the time interval satisfying the claim of the theorem, in the sense that there exists an $f$-reliable schedule in time $\Delta$, while for any $\Delta' <\Delta $, there does not exist an $f$-reliable schedule in time $\Delta'$. 
We show first that the necessary and sufficient condition for the existence of such an $f$-reliable schedule is the following.

\noindent {\bf Condition 1:} {\em There must exist a schedule involving $\left\lfloor \frac{k}{f+1} \right\rfloor$ robots (all reliable) at arbitrary initial positions on $L$, which solves the exploration of $L$ in time $\Delta$.}

Indeed, by Observation \ref{f+1}~each node of the line $L$ must be explored by at least $f+1$ robots. Therefore we can partition the collection of robots into $f+1$ groups, each group entirely exploring line $L$. The least numerous of these groups can contain no more than $\left\lfloor \frac{k}{f+1} \right\rfloor$ robots and this group must explore $L$. Conversely, if $\left\lfloor \frac{k}{f+1} \right\rfloor$ robots can explore line $L$ in time $\Delta$, we can form $f+1$ groups of $\left\lfloor \frac{k}{f+1} \right\rfloor$ robots each, executing the same schedule and the line is explored by each of $f+1$ independent groups.

By Theorem~\ref{thm:ArbInit}, Algorithm~\ref{algo:ArbInit} computes the optimal time of line exploration by a collection of robots which may be placed at arbitrary initial positions. Consequently, the output of Algorithm \ref{algo:ArbInit} run for $r=\lfloor \frac{k}{f+1} \rfloor$ robots exactly verifies Condition 1. By Theorem~\ref{thm:ArbInit}, its time complexity is then as stated in the claim of the theorem.
\qed
\end{proof}
\section{Proof of Theorem~\ref{th:NP-DEL}}

The proof of Theorem \ref{th:NP-DEL} is split into two lemmas. We first show that the $\DEL$ decision problem is strongly NP-hard, and then that the $\DEL$ decision problem is in NP.

\begin{lemma}\label{lem:NP-hard-DEL}
The $\DEL$ decision problem is strongly NP-hard.  
\end{lemma}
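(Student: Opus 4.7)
The plan is to give a polynomial-time many-to-one reduction from $\fpartb$, which is strongly NP-hard (SP16 in~\cite{johnson1985np}). Given an $\fpartb$ instance with sets $A=\{a_1,\dots,a_n\}$, $B=\{b_1,\dots,b_n\}$, $C=\{c_1,\dots,c_n\}$ of positive integers and target $T$ (so $\sum a_i+\sum b_j+\sum c_k = nT$), I would construct in polynomial time a $\DEL$ instance. Setting $f=2$ means, by Observation~\ref{f+1}, that every node of the line must be visited by at least three distinct robots before the time bound $\Delta$.

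Next, I would design the line $L$ as a concatenation of $n$ ``coverage cells'' of length $T$ each, interspersed with carefully tuned separator segments, and I would place $3n$ robots in three clusters labelled $A$, $B$, $C$. The robot representing $a_i$ (resp.\ $b_i$, $c_i$) is placed at a starting position chosen so that, with the common time budget $\Delta$, it can sweep into a chosen cell an interval of length exactly $a_i$ (resp.\ $b_i$, $c_i$) but nothing more. The separator widths and $\Delta$ are calibrated so tightly that (i) each robot can contribute to at most one cell, and (ii) the total coverage available over all $3n$ robots equals exactly $3nT$, that is, three times the total cell length.

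The correctness argument has two directions. From an $\fpartb$ solution $(\sigma,\tau)$ I construct a feasible schedule by routing, for each triple $(a_i, b_{\sigma(i)}, c_{\tau(i)})$ summing to $T$, the three corresponding robots to sweep inside the same cell; since their contributions sum to exactly $T$, the cell is tiled three times on the nose. Conversely, given any feasible schedule, I would apply arguments in the spirit of Observations~\ref{obs-increasingSequence} and~\ref{disjoint} to assume without loss of generality that every robot sweeps a single interval, and then use the tightness of $\Delta$ to conclude that each cell is covered exactly three times, by one robot from each cluster whose values sum to~$T$, which yields an $\fpartb$ solution. Because the line length, positions, deadlines, and $\Delta$ are polynomial in $n$ and in the (unary) input values, the reduction is polynomial even in the unary encoding, establishing strong NP-hardness.

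The main obstacle will be gadget engineering: arranging the separator regions so no robot can ``cheat'' by contributing partial coverage to two cells, and ensuring that the three covers of any cell cannot come from two robots of the same cluster. I expect this will require an asymmetric three-cluster layout (for instance, $A$-robots placed to the left of the cell strip, $C$-robots to the right, and $B$-robots interleaved with cluster-distinguishing offsets chosen from well-separated orders of magnitude so that the reach of a robot identifies its cluster) together with a $\Delta$ that simultaneously forbids cross-cell travel and makes the sum-to-$T$ condition the unique way to triply cover each cell.
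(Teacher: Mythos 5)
Your high-level plan (a polynomial many-to-one reduction from \fpartb) matches the paper's, but the gadget you sketch is structurally different and, as described, has a gap that the promised ``gadget engineering'' does not close. You want $n$ cells of length $T$, each triply covered ($f=2$ fixed), with the triple covering cell $j$ consisting of one robot per cluster whose values sum to $T$; the matching is encoded by \emph{which cell} each $B$- or $C$-robot travels to. The problem is that a robot has one fixed starting position and the common budget $\Delta$, so the length of the interval it can sweep inside cell $j$ is $\Delta$ minus (roughly) its travel distance to cell $j$, which varies with $j$. You therefore cannot arrange that the robot for $b_i$ contributes ``exactly $b_i$ and nothing more'' to an \emph{arbitrary} cell chosen by the schedule: calibrated for the nearest reachable cell it contributes strictly less to a farther one (breaking the exact counting argument $\sum = 3nT$), and calibrated for the farthest it over-contributes to nearer ones (breaking the ``unique way to triply cover'' argument). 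Putting the $A$-cluster on the left, the $C$-cluster on the right and interleaving $B$ does not remove this dependence on the target cell. A second, smaller gap: the separator segments are themselves nodes of $L$ and must also be visited by $f+1=3$ robots before $\Delta$, yet your sketch assigns nobody to them; shrinking them to width zero reintroduces exactly the cross-cell leakage they were meant to forbid.

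The paper sidesteps both issues by inverting the roles of cells and multiplicity: it uses a \emph{single} segment that must be covered $f+1=q$ times (so $f=q-1$ grows with the instance rather than being constant), places $\cA_i$ at $a_i$, $\cB_i$ at $I+2b_i$ and $\cC_i$ at $2I+4c_i$ on a line of length $3I-4S-1$ with $\Delta=I-1$, and lets each layer of the $q$-fold cover be one triple. Every robot then operates essentially in place (sweep left past its cluster's anchor node $0$, $I$ or $2I$, then right), so its overhead depends only on its own value, and the doubling coefficients $1,2,4$ make the right endpoint of the $\cC$-robot's swept interval equal $3I-1-4\bigl(a_i+b_{\pi_B(i)}+c_{\pi_C(i)}\bigr)$, which reaches the end of the line exactly when the triple sums to $S$. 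To salvage your variant you would need a mechanism making a robot's usable coverage independent of the cell it is routed to; without one, the reverse direction of your reduction does not go through.
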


\begin{proof}
We construct a polynomial-time many to one reduction from the following strongly NP-hard problem referenced as [SP16] in \cite{johnson1985np}.

\probleme{Numerical 3-Dimensional Matching (\fpartb)}
{Three multisets of positive integers $A = \{a_1, a_2, \dots, a_q\}, B = \{b_1, b_2, \dots, b_q\}, C = \{c_1, c_2, \dots, c_q\}$, and an integer $S$.}
{Does there exist two permutations $\pi_B,\pi_C$ of $[1,q]$ such that for every $1\leq i\leq q$, $a_i+b_{\pi_B(i)}+c_{\pi_C(i)} =S$?}

We construct an instance $(L,P,f, \Delta)$ of the $\DEL$ problem from an instance of $\fpartb$ as follows. Let $a= \max_{i\in [1,q]}(a_i)$, $b= \max_{i\in [1,q]}(b_i)$ and $c= \max_{i\in [1,q]}(c_i)$. Let $I=4S+6a+6b+12c$ and $\ell=3I-4S-1$. $L$ is the line of length $\ell$ (with $\ell+1$ nodes).
Each edge of $L$ has weight one. For the sake of simplicity we name $i$ the node of $L$ at distance $i$ from the leftmost node. We have $3q$ robots each corresponding to an integer from one of the multisets $A, B$ or $C$. For each $i=1,2,\dots, q$, 
we put three robots: one robot $\cA_i$ at node $\alpha_i=a_i$, one robot $\cB_i$ at node $\beta_i=I+2b_i$ and one robot $\cC_i$ at node $\gamma_i=2I+4c_i$. The number of faults $f$ is equal to $q-1$ and the time interval $\Delta$ is equal to $I-1$. The construction can be done in polynomial time. We show that the answer to the constructed instance of the $\DEL$ problem is the same as the answer to the original instance of $\fpartb$.

First, assume that there exists a solution $\pi_B,\pi_C$ to the instance of the $\fpartb$ problem. We show that the robots can solve the corresponding instance of the $\DEL$ problem as follows. 

%For $i=1,2,\dots,q$, we consider robot $\cA_{i}$ starting at position $\alpha_{i}$, robot $\cB_{\pi_B(i)}$ starting at position $\beta_{\pi_B(i)}$, robot $\cC_i$ starting at position $\gamma_{\pi_C(i)}$ and robot $\cD_i$ starting at positions $\delta_{i}$. 

Robot $\cA_{i}$ will first move to the left until reaching node $0$ (moving distance $a_{i}$), and then to the right until reaching node $\alpha_{i}^r=I-1-a_{i}$ (moving distance $I-1-a_{i}$). This can be done in time $\Delta = I-1$ and thus robot $\cA_{i}$ has visited in time all nodes in the interval $[0, \alpha_{i}^r]$. 

Robot $\cB_{\pi_B(i)}$ will first move to the left until reaching node $\beta_{\pi_B(i)}^l =\alpha_{i}^r+1$ (moving distance $a_{i}+ 2b_{\pi_B(i)}$) and then to the right until reaching node $\beta_{\pi_B(i)}^r=2I-1-2a_{i}-2b_{\pi_B(i)}$ (moving distance $I-1- a_{i}- 2b_{\pi_B(i)}$). This can be done in time $\Delta = I-1$ and thus it has visited in time all nodes in the interval $[\beta_{\pi_B(i)}^l, \beta_{\pi_B(i)}^r]$.

Robot $\cC_{\pi_C(i)}$ will first move to the left until reaching node $\gamma_{\pi_C(i)}^l =\beta_{\pi_B(i)}^r+1$ (moving distance $2a_{i}+ 2b_{\pi_B(i)}+ 4c_{\pi_C(i)}$) and then to the right until reaching node $\gamma_{\pi_C(i)}^r=3I-1-4a_{i}-4b_{\pi_B(i)}-4c_{\pi_C(i)}$ (moving distance $I-1-2a_{i}- 2b_{\pi_B(i)}- 4c_{\pi_C(i)}$). This can be done in time $\Delta = I-1$ and thus it has visited in time all nodes in the interval $[\gamma_{\pi_C(i)}^l, \gamma_{\pi_C(i)}^r]$.
 Observe that we have :
\begin{align*}
& 3I-1 -4a_{i}-4b_{\pi_B(i)}-4c_{\pi_C(i)}  =  3I - 4S - 1  =  \ell
\end{align*}

Hence $\cC{\pi_C(i)}$ has visited in time all nodes in the interval $[\delta_{\pi_C(i)}^l, \ell]$.

\begin{figure} [!htb]
\begin{center}
\includegraphics[width=0.75\textwidth]{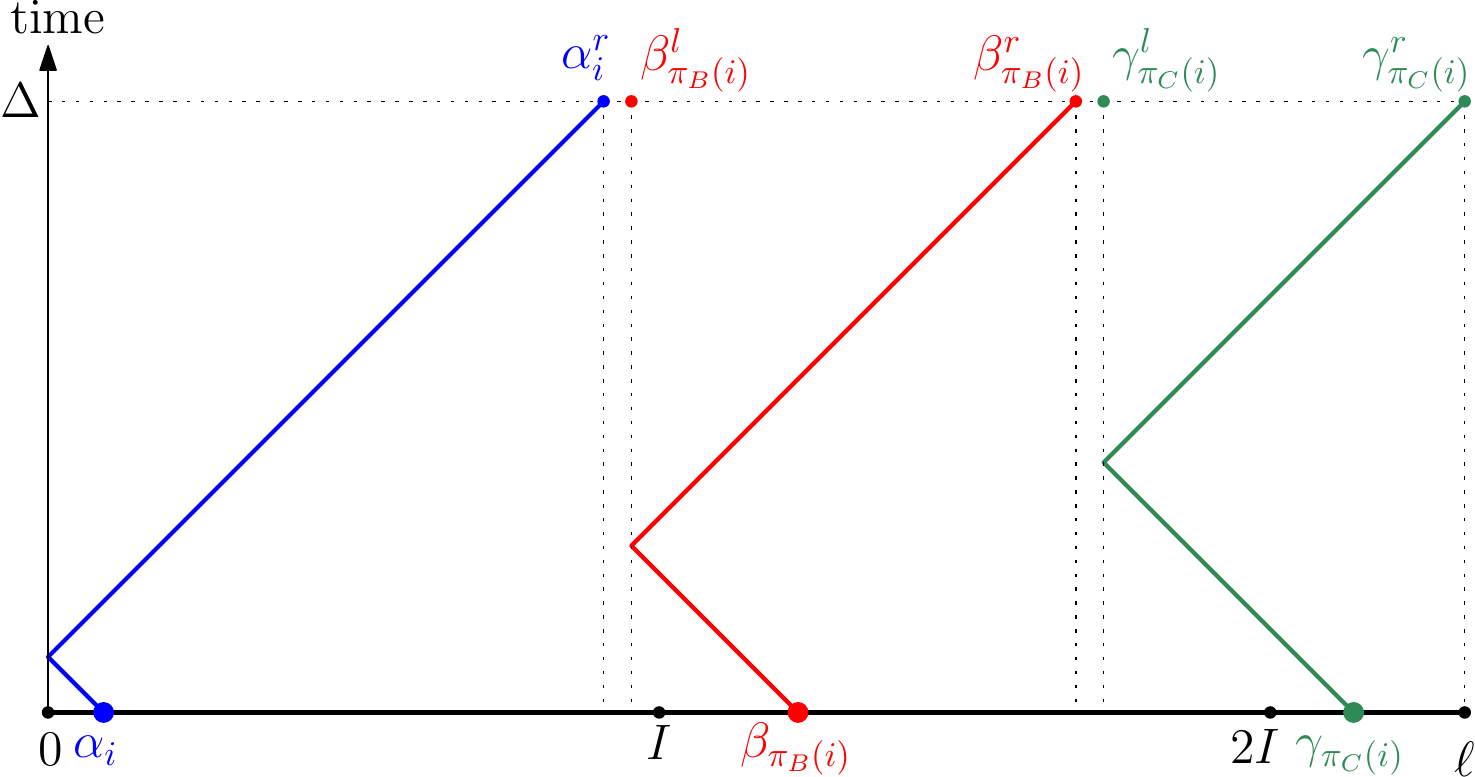}
\end{center}
\caption{An illustration of robots $\cA_i$, $\cB_{\pi_B(i)}$ and $\cC_{\pi_C(i)}$ visiting all nodes of the line.}
\label{fig:covering-quadruple}
\end{figure}

Since $[0, \alpha_{i}^r]\cup [\beta_{\pi_B(i)}^l, \beta_{\pi_B(i)}^r]\cup 
[\gamma_{\pi_C(i)}^l, \ell] =[0,\ell]$, all nodes of the line have been visited in time by either $\cA_i$, $\cB_{\pi_B(i)}$ or $\cC_{\pi_C(i)}$ for $i = 1,2,\dots, f+1$. It follows that every node is visited by at least one non-faulty robot and this is a solution to the $\DEL$ problem.

Now assume there is a solution to the $\DEL$ problem. Let $\bA=\{\cA_i\mid 1\leq i\leq q\}$, $\bB=\{\cB_i\mid 1\leq i\leq q\}$ and $\bC=\{\cC_i\mid 1\leq i\leq q\}$. First, we show the following claim.

\begin{claim}\label{claim:part}
The robots in $\bA$ must visit node $0$ and they are the only robots that can do it,
the robots in $\bB$ must visit node $I$ and they are the only robots that can do it and
the robots in $\bC$ must visit node $\ell$ and they are the only robots that can do it.
\end{claim}

For $i = 1,2,\dots, q$, the robots in positions $\beta_i$ and $\gamma_i$ are to far (at distance at least $I$) to reach node $0$ in time smaller than $\Delta = I-1$. The robots in positions $\alpha_i$ are the only ones that can visit the node $0$ and since this node must be visited by $f+1=q$ robots, they must all visit it. Hence, the robots in positions $\alpha_i$ cannot visit in time node $I$ which is at distance $I$ of node $0$. Similarly, the robots in positions $\beta_i$ are the only ones that can visit the node $I$ and since this node must be visited by $f+1=q$ robots, they must all visit it. Similarly, the robots in positions $\gamma_i$ are the only ones that can visit the node $2I$ and so node $\ell$ (since $\ell >2I$) and since this node must be visited by $f+1=q$ robots, they must all visit it. This ends the proof of the claim. 

For $i = 1,2,\dots, q$, let $[0, \alpha_i^r]$ be the interval of nodes visited by robot $\cA_i$,
$[\beta_i^l, \beta_i^r]$ be the interval of nodes visited by robot $\cB_i$ and
$[\gamma_i^l, \gamma_i^r]$ be the interval of nodes visited by robot $\cC_i$. 

\begin{claim}\label{claim:covering}
There are two permutations $\pi_B(i)$ and $\pi_C(i)$ such that 
for $i=1,2\dots,q$, $\beta_{\pi_B(i)}^l = \alpha_{i}^r +1 $ and $\gamma_{\pi_C(i)}^l = \beta_{\pi_B(i)}^r +1$.
\end{claim}

First observe that if there is a portion of the line that is visited by more than $f+1=q$ robots, then it means that there are robots from two different sets (for example, robots from sets $\bA$ and $\bB$). We can then cut the trajectory of some of the robots in order to decrease the number of robots visiting the same node. So we can assume without loss of generality that each node is visited by exactly $q$ robots. This means that there is a partition of the robots into $q$ subsets, such that every node of the line is visited in time by exactly one robot of each subset. By the first claim, there is one robot of each set $\bA$, $\bB$ and $\bC$ in each of $q$ subsets. Hence, for $i=1,2,\dots,q$, there is a subset of robots $\{\cA_i, \cB_{\pi_B(i)}, \cC_{\pi_C(i)}\}$ that must visit all the nodes of the line. Since two robots of the same subset do not visit the same node, their intervals are disjoint. This ends the proof of the claim.

By the second claim, 
Robot $\cA_i$ can travel a distance $\Delta$ to search its interval $[0, \alpha_i^r]$.
Observe that $\cA_i$ starts at distance $a_i$ from node $0$. Since $\Delta > 3a\geq 3a_i$, 
the optimal way for robot $\cA_i$ to search its interval is to first go to the left and 
then to the right. So, we have $\alpha_i^r = I-1-a_i$. By Claim~\ref{claim:covering}, we 
have $\beta_{\pi_B(i)}^l = \alpha_{i}^r +1 = I-a_i$. Robot $\cB_{\pi_B(i)}$ starts at distance $a_i+2b_i$ from node $\beta_{\pi_B(i)}^l$. Since $\Delta > 3a+6b\geq 3(a_i+2b_i)$, the optimal way for robot $\cB_i$ to search its interval is to first go to the left and 
then to the right. So, we have $\beta_{\pi_B(i)}^r = I-2a_i-2b_{\pi_B(i)}-1$. By the last Claim, we have $\gamma_{\pi_C(i)}^l = \beta_{i}^r +1 = I-2a_i-2b_{\pi_B(i)}$. Robot $\cC_{\pi_C(i)}$ starts at distance $2a_i+2b_{\pi_B(i)}+4c_{\pi_C(i)}$ from node $\gamma_{\pi_C(i)}^l$. Since $\Delta > 6a+6b+12c\geq 3(2a_i+2b_{\pi_B(i)}+4c_{\pi_C(i)})$, the optimal way for robot $\cC_i$ to search its interval is to first go to the left and then to the right. Observe that since robot $\cC_{\pi_C(i)}$ must visit node $\ell$, we have : 
\begin{align*}
3I-4a_i-4b_{\pi_B(i)}-4c_{\pi_C(i)}-1  = 3I - 4S -1
\iff a_i+b_{\pi_B(i)}+c_{\pi_C(i)} = S
\end{align*}

Hence, $\pi_B,\pi_C,\pi_D$ is a solution for the instance of the $\fpartb$ problem.
\qed
\end{proof}

\begin{lemma}\label{lem:NP-DEL}
The $\DEL$ decision problem is in NP.  
\end{lemma}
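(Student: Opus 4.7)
The plan is to exhibit a polynomial-size certificate for $\DEL$ that can be verified in polynomial time. The witness will consist, for each robot, of an ordered list of turning nodes describing its trajectory. To bound the size of such a certificate, I will extend Observation~\ref{obs-increasingSequence} from the single-robot setting to each robot of the collection independently: every robot moves in maximal monotone segments alternating in direction, each turn occurs at a node of $L$, and the displayed strict inequalities in Observation~\ref{obs-increasingSequence} force the consecutive rightmost (respectively leftmost) extrema to be strictly monotone. Hence each robot's trajectory is fully described by at most $n$ turning nodes, so the entire witness has size $O(kn)$.

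The verifier proceeds as follows. For every robot $r$ it first checks that the declared turning sequence starts at $r$'s prescribed initial position, alternates in direction, and has total travel time at most $\Delta$; this is $O(n)$ per robot. Then, for each pair $(v,r)$ consisting of a line node and a robot, the verifier walks through the turning sequence of $r$ to compute the earliest time at which $r$ reaches $v$ (if at all) in $O(n)$ per pair, hence $O(kn^2)$ overall. Finally, for each node $v$, the verifier counts the number of distinct robots that visit $v$ by time $\Delta$ and accepts iff this count is at least $f+1$ at every node. The correctness of the threshold test is the standard equivalence: ``for every choice of up to $f$ faulty robots, each node is visited in time by some reliable robot'' is equivalent to ``each node is visited in time by at least $f+1$ distinct robots'', since if only $\ell\leq f$ robots reach $v$ on time the adversary simply marks those $\ell$ robots as faulty, while the converse direction is immediate.

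The main obstacle is the justification of the witness bound in the presence of potentially faulty robots. Observation~\ref{obs-increasingSequence} is originally stated for a single explorer, and one might worry that in the multi-robot redundant setting a robot gains something by oscillating over already-explored territory in order to contribute additional visits toward the $(f{+}1)$-threshold elsewhere. I will dispatch this by a simple exchange argument: given any feasible schedule for robot $r$, erasing a turn that does not strictly extend $r$'s visited interval and continuing its previous direction can only make $r$'s first-visit time to every subsequent node earlier (or unchanged), and thus preserves every ``visited by time $\Delta$'' event for every node. Applying this transformation exhaustively to each robot independently produces a feasible schedule of the restricted form, confirming the $O(kn)$ bound on the witness and placing $\DEL$ in $\mathrm{NP}$.
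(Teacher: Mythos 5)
Your proof is correct and follows essentially the same approach as the paper: a polynomial-size certificate consisting of the robots' trajectories, verified by simulation and by checking that every node is reached by at least $f+1$ distinct robots before time $\Delta$. The only difference is that you justify the $O(kn)$ bound on trajectory descriptions via an explicit exchange argument, whereas the paper simply asserts an $O(n^2)$ bound per trajectory; both yield a polynomial certificate.
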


\begin{proof}
We consider the verifier-based definition of NP. A certificate for the instance of the $\DEL$ decision problem is simply the set of the trajectories of the $k$ robots. Each trajectory is of length $O(n^2)$ and hence this certificate is in $O(kn^2)$ and so polynomial in the size of the instance. We can check in polynomial time (by simulating the trajectories of the robots) that every node of the line is visited before time $\Delta$ by at least $f+1$ robots. Thus, the certificate can be verified in polynomial time.
\qed
\end{proof}

\section{Illustration of the Snapshot Graph for a Ring}

\begin{figure}[!htb]
\begin{center}
\includegraphics[width=0.95\textwidth]{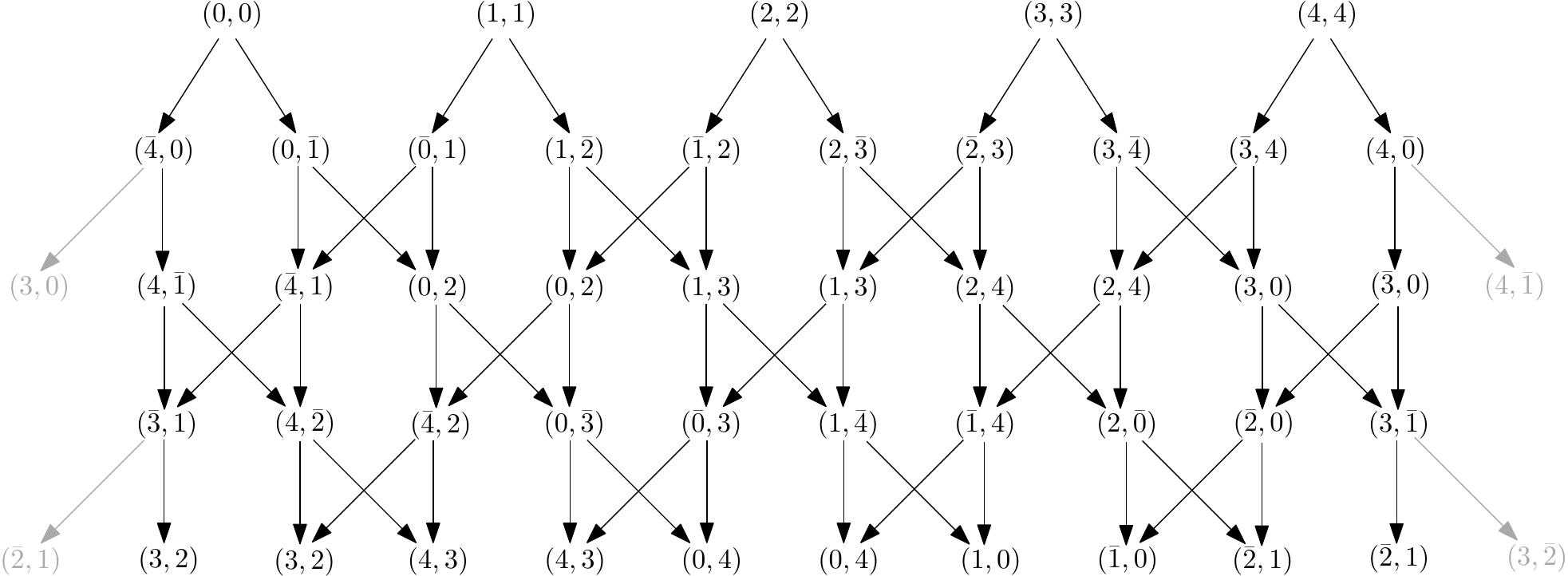}
\caption{Snapshot graph for a case of ring $R$ of five nodes. Grey nodes and edges are duplicates of other nodes at the same level (for presentation clarity). All last level nodes correspond to the ring entirely explored.}
\label{fig:snapshot-graph-ring}
\end{center}
\end{figure}

\section{Proof of Proposition~\ref{prop:RingFixedTij}}
\begin{proof}
Take a pair $i, i+1$ of successive robots around the ring $R$ for which the distance of their initial positions is the smallest. In an optimal exploration on the segment $[p_i, p_{i+1}]$ of $R$, one of its edges is idle. Knowing, which such edge is idle, we might remove it from $R$ converting the ring to a line segment. Then the line exploration Algorithm~\ref{algo:kRobotsLineFixed} may be executed for such a segment. As the segment $[p_i, p_{i+1}]$ is of size $O(n/k)$, one possible approach is to try all the possibilities of making idle every edge of $[p_i, p_{i+1}]$, each time running Algorithm~\ref{algo:kRobotsLineFixed}~for the ring segment thus obtained. This would result in overall complexity $O(n^3/k)$.

Consider the following, more careful adaptation of Algorithm~\ref{algo:kRobotsLineFixed} for the ring.  Its first part (lines 1-5) may be run once, computing all values $T_{i,j}$ in $O(n^2)$ time. Then the second part (lines 6-7) are repeated $O(n/k)$ times, i.e. for all segments $O(n/k)$ obtained from $R$ by removal of each possible idle edge between $p_i$ and $p_{i+1}$. Moreover, the min computation from line 7, by Observation~\ref{obs:logn},~may be computed in $(\log n)$ time. This results in an $O(\frac{n^2}{k} \log n)$ complexity of lines 6-7 hence in $O\left(n^2+\frac{n^2}{k} \log n \right)$ ring exploration algorithm.
\qed
\end{proof}

\section{Proof of Proposition~\ref{prop:RingFixedTij1}}

\begin{proof}
%There exists an optimal exploration leaving an idle edge between initial positions of any pair of consecutive robots. 

Create ring $R^{(f+1)}$ formed of $f+1$ copies of $R$, thus obtaining $k(f+1)$ possible starting positions for $k$ robots. We need to find an exploration of ring $R^{(f+1)}$ in time $T$ using $k$ robots, which may be placed at $k(f+1)$ starting positions. If such explorations are possible, then there exists one, for which each robot covers a disjoint segment of $R^{(f+1)}$, with idle edges separating them. Consider one such edge and remove it from $R^{(f+1)}$, obtaining a segment $S$ of size $n(f+1)-1$. The set of $k$ robots explore $S$ in time $T$. As the chosen idle edge belongs to some copy of ring $R$, it is sufficient to consider $n$ segments $S_0, S_1, \dots, S_{n-1}$ of size $n(f+1)-1$ and check whether one of them may be explored in time $T$.

From the corresponding snapshot graph, we compute first for any position $i$ on the ring $R^{(f+1)}$, the value $P(i)$ denoting the largest position $j$, in the counterclockwise direction around  $R^{(f+1)}$, such that a robot placed at a permitted initial position can explore in time $T$ the segment $[i,j]$ of ring $R^{(f+1)}$. Consider now an algorithm deciding for any given segment $S_m$, where $m=0,1 \dots, n-1$, whether $S_m$ may be explored in time $T$ by some set of $k$ robots, each of which may be placed at any of the given $k(f+1)$ starting positions. Starting from the initial endpoint of $S_m$, for all consecutive values of $r=1,2, \dots, k$, we compute the largest index $i_{S_m}^{r}$, such that the initial sub-segment of $S_m$ ending at node $i_{S_m}^{r}$ may be explored by a set of $r$ robots in time $T$. We can prove by induction on $r$ that 
$$
i_{S_m}^{r+1}=P(i_{S_m}^{r}+1)
$$
If $i_{S_m}^{k}$ reaches (or exceeds) the last node of segment $S_m$, then $S_m$ is explorable by $k$ robots in time $T$. 

We repeat the procedure for all segments $S_m$. As ring $R^{(f+1)}$ is possible to be explored at time $T$ if and only if one of the segments $S_m$ may be explored in time $T$ this concludes the proof.

%Create a ring $R^{(f+1)}$ formed of $f+1$ copies of $R$, thus obtaining $k(f+1)$ possible starting positions for $k$ robots. Let $S$ be a shortest segment around $R^{(f+1)}$, starting and ending by initial robot positions.  Let $S_1, S_2, \dots, S_m$ be the set of all segments of $R^{(f+1)}$ each one obtained by the removal of a single edge of $S$. Run Algorithm~\ref{algo:kRobotsLineFixed} for each line segment $S_1, S_2, \dots, S_m$ and retain one, say $S_{q^*}$ with the shortest exploration time. As there exists an optimal exploration of $R^{(f+1)}$, which leaves one of the edges of $S$ as idle, the time of an optimal exploration of $R^{(f+1)}$ equals the time of an optimal exploration of $S_{q^*}$.

%Since it is clear that $m \in O(n/k)$ and the size of $S_{q^*}$ is in  $O(nf)$, by Theorem~\ref{thm:kRobotsLineFixed}, the complexity of such procedure is in $O(\frac{n^3f^2}{k})$.

\qed
\end{proof}

\section{Proof of Proposition~\ref{prop:star}}

\begin{proof}

We accomplish the reduction from the Partition Problem~\cite{garey2002computers}.

\probleme{Partition}
{A sets of $q$ of positive integers $A = \{a_1, a_2, \dots, a_q\}$}
{Does there exist a partition of set $A$ into two subsets of equal sum. 
}

We construct a polynomial-time reduction from the Partition problem. Consider an instance of the partition problem with the set  $A = \{a_1, a_2, \dots, a_q\}$. Let $\sum_{i=1}^q a_i = 2\sigma$. We design the corresponding instance of the star exploration problem. Consider a star consisting of $q+4$ edges $e_1, e_2, \dots, e_{q+4}$. Let the weight $w$ of each edge be such that $w(e_i)=a_i$, for $i = 1, 2, \dots, q$, and 
$w(e_{q+1})=w(e_{q+2})=w(e_{q+3})=w(e_{q+4})=4\sigma$. Take two mobile robots $1$ and $2$ and put them at the starting positions at the endpoints of edges $e_{q+1}$ and $e_{q+2}$, different from the centre of the star. Let the deadline of each node of the star be $\Delta(e_i)=10\sigma$, for  $i = 1, 2, \dots, q+4$. Note that the sum of the weights of all edges of the star equals $18\sigma$. Further, observe that each robot has to end its route at one of the edges $e_{q+3}$ and $e_{q+4}$. Indeed, otherwise one of the edges $e_{q+3}$ or $e_{q+4}$ would be traversed twice (by the same robot in both directions) and the sum of the trajectories of both robots would exceed $22\sigma$. Hence one of the robots would arrive to its last node after time $11\sigma$ and its deadline would not be met.

Consequently, robots must traverse once both edges $e_{q+1}$ and $e_{q+2}$ at the beginning of their respective routes and finish the routes by traversing edges  $e_{q+3}$ and $e_{q+4}$. Each of the remaining edges $e_i$, for $i = 1, 2, \dots, q$, must be traversed in both directions and the sum of the robot route lengths is at least $4 \cdot 4\sigma + \sum_{i=1}^q w(e_i) = 20\sigma$. In order for both robots to reach their last nodes  within their deadline time of $10\sigma$, each of them must traverse the subset of edges of total length $\sigma$. This requires solving the given instance of the partition problem. 

It is easy to see that the above reduction works not only for the star exploration from given starting positions, but also from arbitrary ones.
\qed
\end{proof}

\end{document}